\renewcommand{\eqref}[1]{\hyperref[#1]{(\ref*{#1})}}
\theoremstyle{theorem}
\newtheorem{theorem}{Theorem}
\newtheorem{corollary}[theorem]{Corollary}
\newtheorem{lemma}[theorem]{Lemma}
\newtheorem{definition}[theorem]{Definition}
\newtheorem{claim}[theorem]{Claim}
\newtheorem{fact}[theorem]{Fact}
\theoremstyle{definition}
\newtheorem{remark}[theorem]{Remark}
\newtheorem{observation}[theorem]{Observation}
\newcommand{\prob}[2]{\mathop{\mathrm{Pr}}_{#1}\left[#2\right]}
\newcommand{\avg}[2]{\mathop{\textbf{E}}_{#1}[#2]}
\newcommand{\poly}{\mathop{\mathrm{poly}}}
\newcommand{\AC}{\mbox{\rm AC}}
\newcommand{\naturals}{\mathbb{N}}
\newcommand{\mc}[1]{\mathcal{#1}}
\newcommand{\OR}{\mathrm{OR}}
\newcommand{\AND}{\mathrm{AND}}
\newcommand{\Err}{\mathrm{Err}}
\title{On Polynomial Approximations to $\AC^0$\thanks{A preliminary version of this paper appeared in {\em Proc.\ $20$th International Workshop on Randomization
  and Computation (RANDOM)} 2016~\cite{HarshaS2016a}}}
\author{Prahladh Harsha\thanks{TIFR, Mumbai, India. \texttt{prahladh@tifr.res.in}. Research supported in part by UGC-ISF
grant 6-2/2014(IC).} 
\and
Srikanth Srinivasan\thanks{Department of Mathematics, IIT Bombay, Mumbai, India. \texttt{srikanth@math.iitb.ac.in}}}
\begin{document}
\maketitle

\begin{abstract}
Classical $\AC^0$ approximation results show that any $\AC^0$ circuit of size $s$ and depth $d$ has an $\varepsilon$-error probabilistic polynomial over the reals of degree $(\log (s/\varepsilon))^{O(d)}$. We improve this upper bound to $(\log s)^{O(d)}\cdot \log(1/\varepsilon)$, which is much better for small values of $\varepsilon$.

We then use this result to show that $(\log s)^{O(d)}\cdot \log(1/\varepsilon)$-wise independence fools $\AC^0$ circuits of size $s$ and depth $d$ up to error at most $\varepsilon$, improving on Tal's strengthening of Braverman's result that $(\log (s/\varepsilon))^{O(d)}$-wise independence suffices. To our knowledge, this is the first PRG construction for $\AC^0$ that achieves optimal dependence on the error $\varepsilon$.

We also prove lower bounds on the best polynomial approximations to $\AC^0$. We show that any polynomial approximating the $\OR$ function on $n$ bits to a small constant error must have degree at least $\widetilde{\Omega}(\sqrt{\log n})$. This result improves exponentially on a result of Meka, Nguyen, and Vu ({\em Theory Comput.} 2016).
\end{abstract}

\section{Motivation and Results}

In this paper, we study $\AC^0$ circuits, the family of circuits of constant depth and polynomial size (in the input length) with unbounded-fanin $\AND$ and $\OR$ gates. We use $\AC^0(s,d)$ to denote the family of $\AC^0$ circuits of size $s$ and depth $d$.

\paragraph{Polynomial approximations to $\AC^0$.} In his breakthrough work on proving lower bounds for the class $\AC^0[\oplus]$, Razborov~\cite{Razborov1987} studied how well small circuits can be approximated by low-degree polynomials. We recall (an equivalent version of) his notion of polynomial approximation over the reals.

An \emph{$\varepsilon$-error probabilistic polynomial} (over the reals) for a circuit $C(x_1,\ldots,x_n)$ is a random polynomial $\mathbf{P}(x_1,\ldots,x_n)\in \mathbb{R}[x_1,\ldots,x_n]$ such that for any $a\in \{0,1\}^n$, we have $\prob{\mathbf{P}}{C(a) \neq \mathbf{P}(a)}\leq \varepsilon$. Further, we say that $\mathbf{P}$ has degree $D$ and $\Vert \mathbf{P}\Vert_\infty \leq L$ if $\mathbf{P}$ is supported on polynomials $P$ of degree at most $D$ and $L_\infty$ norm at most $L$ (i.e. polynomials $P$ such that $\max_{a\in \{0,1\}^n}|P(a)| \leq L$). If there is such a $\mathbf{P}$ for $C$, we say that $C$ has an $\varepsilon$-error probabilistic degree at most $D$ and $L_\infty$ norm at most $L$.

It is well-known~\cite{TodaO1992,Tarui1993,BeigelRS1991} that any circuit $C\in \AC^0(s,d)$ has an $\varepsilon$-error probabilistic polynomial $\mathbf{P}$ of degree $(\log (s/\varepsilon))^{O(d)}$ and satisfying $\Vert \mathbf{P}\Vert_\infty < \exp\left((\log s/\varepsilon)^{O(d)}\right)$. This can be used to prove, for example~\cite{Smolensky1987}, (a slightly weaker version of) H\r{a}stad's theorem~\cite{Hastad1989} that says that Parity does not have subexponential-sized $\AC^0$ circuits. It also plays an important role in Braverman's theorem~\cite{Braverman2010} that shows that polylog-wise independence fools $\AC^0$ circuits.

\paragraph{Upper bounds for probabilistic polynomials.} We show a general result regarding error reduction of probabilistic polynomials over the reals. 

\begin{theorem}
\label{thm:gen-err}
Suppose $f:\{0,1\}^n\rightarrow \{0,1\}$ has a $(\frac{1}{2}-\delta)$-error probabilistic polynomial $\mathbf{P}$ of degree $D$ and $L_\infty$ norm at most $L\geq 2$. Then, for any $\varepsilon > 0$, $f$ has an $\varepsilon$-error probabilistic polynomial of degree at most $O\left(\frac{D}{\delta^2}\log(1/\varepsilon)\right)$ and $L_\infty$ norm at most $L^{O\left(\frac{1}{\delta^2}\log \frac{1}{\varepsilon}\right)}$.
\end{theorem}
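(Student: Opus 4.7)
My plan is to amplify the success probability via independent repetition and then collapse the repetitions using a single low-degree ``robust majority'' polynomial.

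\textbf{Step 1 (Independent repetition).} Fix $k := C \log(1/\varepsilon)/\delta^2$ for a large enough absolute constant $C$, and draw i.i.d.\ copies $\mathbf{P}_1, \ldots, \mathbf{P}_k$ of $\mathbf{P}$. For any $a \in \{0,1\}^n$, the indicators $\mathbf{1}[\mathbf{P}_i(a) = f(a)]$ are i.i.d.\ Bernoullis with mean at least $1/2 + \delta$, so a Chernoff bound yields that at least $(1/2 + \delta/2)k$ of the $\mathbf{P}_i(a)$ equal $f(a)$ with probability $\geq 1 - \varepsilon/2$ over the draw of the $\mathbf{P}_i$'s; on the remaining indices we still have $|\mathbf{P}_i(a)| \leq L$.

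\textbf{Step 2 (Robust combiner).} The heart of the proof is the construction of a random polynomial $\mathbf{M}(y_1, \ldots, y_k) \in \mathbb{R}[y_1, \ldots, y_k]$ of total degree $D'' = O(\log(1/\varepsilon)/\delta^2)$ with the following robustness property: whenever at least a $(1/2 + \delta/2)$-fraction of the coordinates $y_i$ agree on a common Boolean value $b \in \{0,1\}$ while the remaining coordinates are arbitrary reals in $[-L, L]$, we have $\Pr[\mathbf{M}(y) = b] \geq 1 - \varepsilon/2$ and $\sup_{y \in [-L,L]^k} |\mathbf{M}(y)| \leq L^{O(D'')}$. One natural way to build such an $\mathbf{M}$ is to start from a classical probabilistic polynomial for promise-majority on Boolean inputs (for example via random-subset AND-of-ORs compositions, or via a Chebyshev-based approximation of the sign function), and then harden it against arbitrary but bounded non-Boolean ``bad'' inputs by sampling random coordinates and combining them through low-degree products; paying an extra factor of roughly $1/\delta$ in the degree neutralizes the outliers.

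\textbf{Step 3 (Composition and verification).} Set $\mathbf{P}^*(x) := \mathbf{M}(\mathbf{P}_1(x), \ldots, \mathbf{P}_k(x))$. A union bound over the Chernoff-failure event and the combiner's error event gives $\Pr[\mathbf{P}^*(a) \neq f(a)] \leq \varepsilon$ for every $a \in \{0,1\}^n$. The degree of $\mathbf{P}^*$ is at most $\deg(\mathbf{M}) \cdot D = O(D \log(1/\varepsilon)/\delta^2)$, and since each $|\mathbf{P}_i(x)| \leq L$ on $\{0,1\}^n$, we have $\|\mathbf{P}^*\|_\infty \leq \sup_{y \in [-L,L]^k} |\mathbf{M}(y)| = L^{O(\log(1/\varepsilon)/\delta^2)}$, matching the statement.

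The main obstacle is Step 2: classical probabilistic polynomials for promise-majority are designed for Boolean inputs, whereas here the ``bad'' coordinates can be arbitrary reals of magnitude up to $L$; handling these outliers while keeping the degree at $O(\log(1/\varepsilon)/\delta^2)$ and the $L_\infty$-norm at $L^{O(\log(1/\varepsilon)/\delta^2)}$ is precisely what drives the quadratic dependence on $1/\delta$ in the theorem. Once such a combiner is in place, the other steps reduce to routine Chernoff and degree-composition arguments.
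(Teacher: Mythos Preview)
Your outline has the right skeleton---independent repetition followed by a majority-type combiner---but Step~2 is left as a wish rather than a construction, and the approaches you sketch (random AND-of-ORs, Chebyshev, ``hardening against outliers'') are both vague and unnecessary. The paper fills exactly this gap with a much simpler, \emph{deterministic} combiner: take $Q$ to be the unique multilinear polynomial representing the Majority function on $\ell = O(\delta^{-2}\log(1/\varepsilon))$ Boolean inputs. The key observation (what the paper calls the ``$\ell$-pseudo-majority'' property) is that if you set any $>\ell/2$ of $Q$'s inputs to a common bit $b$, the restricted polynomial is \emph{formally} the constant $b$, by uniqueness of multilinear representations; hence the remaining inputs may be arbitrary reals and $Q$ still outputs $b$. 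This gives the robustness you are looking for with no randomness and no extra work, at degree exactly $\ell$ and weight $2^{O(\ell)}$, so $\|Q(\mathbf{P}_1,\ldots,\mathbf{P}_\ell)\|_\infty \leq w(Q)\cdot L^{\ell} \leq L^{O(\ell)}$.

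Two related points. First, you misattribute the source of the $1/\delta^2$: it comes entirely from the Chernoff bound in Step~1 (the number of repetitions needed so that a strict majority is correct with probability $1-\varepsilon$), not from any cleverness in the combiner. The combiner only needs to be correct when a bare majority of its inputs equal $b$; you do not need the $(1/2+\delta/2)$ threshold, and no extra $1/\delta$ factor is paid in its degree. Second, because the pseudo-majority trick makes the combiner deterministic, there is no second error event and no union bound in Step~3---the only randomness is in the $\mathbf{P}_i$'s.
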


Applying the above result to $(1/10)$-error probabilistic polynomials for $\AC^0$ gives us small-error probabilistic polynomials for $\AC^0$ with better parameters.

\begin{theorem}
\label{thm:prob-polys}
Let $C$ be any $\AC^0$ circuit of size $s$ and depth $d$. Let $\varepsilon > 0$ be any parameter. The circuit $C$ has an $\varepsilon$-error probabilistic polynomial $\mathbf{P}$ of degree at most $(\log s)^{O(d)}\cdot \log(1/\varepsilon)$ and $L_\infty$ norm at most $\exp\left((\log s)^{O(d)}\log (1/\varepsilon)\right)$.
\end{theorem}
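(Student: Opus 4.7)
The plan is to derive \Cref{thm:prob-polys} as a direct consequence of the general error-reduction statement in \Cref{thm:gen-err} applied to a constant-error baseline probabilistic polynomial obtained from the classical constructions of Tarui~\cite{Tarui1993} and Beigel--Reingold--Spielman~\cite{BeigelRS1991}. So the first step is to invoke that classical result with a \emph{constant} error parameter, say $\varepsilon_0 = 1/10$. For $C \in \AC^0(s,d)$, this yields a probabilistic polynomial $\mathbf{P}_0$ of degree $D_0 = (\log s)^{O(d)}$ and $L_\infty$ norm $L_0 \leq \exp((\log s)^{O(d)})$, since plugging $\varepsilon = 1/10$ into the $(\log(s/\varepsilon))^{O(d)}$ bound eliminates the $\varepsilon$-dependence up to the constant in the exponent.

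Next, I would feed $\mathbf{P}_0$ into \Cref{thm:gen-err} with $\delta = 2/5$, which is a legitimate choice since $\varepsilon_0 = 1/10 = 1/2 - \delta$. Because $\delta$ is an absolute constant, the $1/\delta^2$ factors in \Cref{thm:gen-err} disappear into the $O(\cdot)$ notation. The conclusion is that $C$ has an $\varepsilon$-error probabilistic polynomial $\mathbf{P}$ of degree
\[
O\!\left(\tfrac{D_0}{\delta^2}\log(1/\varepsilon)\right) \;=\; (\log s)^{O(d)} \cdot \log(1/\varepsilon),
\]
and with $L_\infty$ norm at most
\[
L_0^{O((1/\delta^2)\log(1/\varepsilon))} \;=\; \exp\!\left((\log s)^{O(d)} \cdot \log(1/\varepsilon)\right),
\]
exactly matching the bounds claimed in \Cref{thm:prob-polys}.

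The only subtlety to verify is that the classical construction actually supplies a polynomial with the $L_\infty$ norm bound $L_0 \leq \exp((\log s)^{O(d)})$ required to apply \Cref{thm:gen-err}; this is exactly the norm bound recorded in the paragraph preceding \Cref{thm:gen-err}, so nothing new is needed. In particular, there is no genuine obstacle in this reduction: the entire technical content of \Cref{thm:prob-polys} is packaged inside \Cref{thm:gen-err}, and the main work of the paper lies in establishing that general error-reduction lemma. The present theorem is obtained by a clean two-line composition: classical constant-error construction, followed by a single invocation of \Cref{thm:gen-err} with constant $\delta$.
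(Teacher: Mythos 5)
Your proposal is correct and is exactly the route the paper takes: the paper derives \cref{thm:prob-polys} by applying \cref{thm:gen-err} to a $(1/10)$-error probabilistic polynomial from the classical constructions of \cite{TodaO1992,Tarui1993,BeigelRS1991}, noting that the theorem ``immediately follows.'' Your instantiation with $\delta = 2/5$ and the verification of the $L_\infty$ hypothesis are both accurate, so there is nothing to add.
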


Similar results on probabilistic polynomials were obtained over $\mathbb{F}_2$ (for the larger class of $\AC^0[\oplus]$ circuits) by Kopparty and Srinivasan~\cite{KoppartyS2012} and extended to all fixed non-zero characteristics by Oliveira and Santhanam~\cite{OliveiraS2015}. They have also found applications in the works of Williams~\cite{Williams2014}, for the purposes of obtaining better algorithms for integer programming, and Oliveira and Santhanam~\cite{OliveiraS2015}, for proving lower bounds on compression by bounded-depth circuits. However, as far as we know, no corresponding results were observed over the reals until now.

The above theorem was motivated by an application to constructing pseudorandom generators (PRGs) for $\AC^0$. As mentioned above, it was shown by Braverman~\cite{Braverman2010} that $\AC^0$ is fooled by polylog-wise independence. The proof of Braverman's theorem proceeds by constructing certain approximating polynomials for $\AC^0$, which in turn depends on two previous polynomial approximation results for this circuit class. The first of these is the $L_2$-approximation result of Linial, Mansour and Nisan~\cite{LinialMN1993} which is based on the classical H\r{a}stad Switching Lemma~\cite{Hastad1989}, and the second is the above mentioned result of Tarui~\cite{Tarui1993} and Beigel et al.~\cite{BeigelRS1991}. Using these constructions, Braverman showed that $\AC^0(s,d)$ is $\varepsilon$-fooled by $(\log (s/\varepsilon))^{O(d^2)}$-wise independence. 

An example due to Mansour appearing in the work of Luby and Veli\v{c}kovi\'{c}~\cite{LubyV1996} demonstrated that $(\log s)^{d-1}\log(1/\varepsilon)$-wise independence is \emph{necessary} to $\varepsilon$-fool $\AC^0(s,d)$. This leads naturally to the question of showing tight bounds for the amount of independence required to fool $\AC^0(s,d)$.

Using an improved switching lemma due to H\r{a}stad~\cite{Hastad2014} (see also the work of Impagliazzo, Matthews, and Paturi~\cite{ImpagliazzoMP2012}), Tal~\cite{Tal2017} gave an improved version of the $L_2$-approximation result of Linial et al.~\cite{LinialMN1993}, and used this to improve the parameters of Braverman's theorem. Specifically, he showed that $(\log (s/\varepsilon))^{O(d)}$-wise independence fools $\AC^0$. 

Tal asked if the dependence on $\varepsilon$ in this result could be made to match the limit given  by Mansour's example. Formally, he asked if $(\log s)^{O(d)}\cdot\log(1/\varepsilon)$-wise independence fools $\AC^0(s,d)$. In this work, we are able to answer this question in the affirmative (\cref{cor:PRGs} below). Up to the constant implicit in the $O(d)$, our result is optimal for all $\varepsilon > 0$.

\paragraph{Comparison to other PRGs for $\AC^0$.} Using standard constructions of $k$-wise independent probability distributions, the above result gives explicit PRGs with seedlength $(\log s)^{O(d)}\cdot \log(1/\varepsilon)$ for fooling circuits from $\AC^0(s,d)$. It is easy to see that this seedlength cannot be improved beyond $\Omega(\log(1/\varepsilon))$ and hence that our result is optimal in terms of the error parameter $\varepsilon$. 

It is also instructive to see how well this compares to general (i.e. not based on limited independence) PRG constructions for $\AC^0$. Using the standard Hardness-to-Randomness paradigm of Nisan and Wigderson~\cite{NisanW1994} and the best known average case lower bounds for $\AC^0$~\cite{ImpagliazzoMP2012,Hastad2014}, it is easy to obtain PRGs of seedlength $(\log s)^{O(d)}\cdot(\log(1/\varepsilon))^2$ for $\AC^0(s,d)$. Furthermore, the Nisan-Wigderson paradigm cannot yield PRGs of seedlength less than $(\log(1/\varepsilon))^2$ given our current state of knowledge regarding circuit lower bounds (see \cref{app:NW} for details). Another recent PRG construction for $\AC^0(s,d)$ due to Trevisan and Xue~\cite{TrevisanX2013} has seedlength $(\log (s/\varepsilon))^{d+O(1)}$. 

The reader will note that both constructions are suboptimal in terms of the dependence on $\varepsilon$ (though both are better than ours in terms of dependence on $s$ and $d$). Interestingly, as far as we know, our construction is the first that achieves an optimal dependence on $\varepsilon$.

\paragraph{Lower bounds for probabilistic polynomials.} We can also ask if our result can be strengthened to yield a seedlength of $(\log s)^{d+O(1)}\cdot \log (1/\varepsilon)$, which would generalize both our current construction and that of Trevisan and Xue~\cite{TrevisanX2013}, and almost  match Mansour's lower bound as well. Such a strengthening could conceivably be obtained by improving the polynomial approximation results for $\AC^0$~\cite{Tarui1993,BeigelRS1991}. Razborov~\cite{Razborov1987} observed that to obtain good approximations for $\AC^0(s,d)$, it suffices to approximate the $\OR$ function on $s$ bits efficiently.  Therefore, we study the probabilistic degree of the $\OR$ function.

Beigel, Reingold and Spielman~\cite{BeigelRS1991} and Tarui~\cite{Tarui1993} showed that the OR function on $n$ bits can be $\varepsilon$-approximated by a polynomial of degree $O((\log n)\cdot \log(1/\varepsilon))$.
While it is easy to show that the dependence on $\varepsilon$ in this result is tight (in fact for \emph{any field}), for a long time, it was not known if \emph{any} dependence on $n$ is necessary over the reals\footnote{In fact, for finite fields of constant size, Razborov~\cite{Razborov1987} showed that the $\varepsilon$-error probabilistic degree of $\OR$ is $O(\log(1/\varepsilon))$, independent of the number of input bits.}. Recently, Meka, Nguyen and Vu~\cite{MekaNV2016} showed that any \emph{constant error} probabilistic polynomial for the OR function over the reals must have degree $\widetilde{\Omega}(\log \log n)$ and hence the dependence on the parameter $n$ is unavoidable. We further improve the bound of Meka et al. exponentially to $\widetilde{\Omega}(\sqrt{\log n})$, which is only a quadratic factor away from the upper bound. 

\subsection{Proof ideas}

Here, we describe the ideas behind the proofs of the main results. 

The proof of \cref{thm:gen-err} is extremely simple. A natural strategy to reduce the error of a (constant-error, say) probabilistic polynomial $\mathbf{P}$ is to sample it independently $\ell = O(\log (1/\varepsilon))$ times to obtain polynomials $\mathbf{P}_1,\ldots,\mathbf{P}_\ell$ and then take the Majority vote among the $\mathbf{P}_i$s, which can be simulated by composing with a multilinear polynomial of degree $\ell$. Indeed, this is exactly what Kopparty and Srinivasan~\cite{KoppartyS2012} do in an earlier work to obtain $\varepsilon$-error probabilistic polynomials over $\mathbb{F}_2$. 

Over the reals, it is not completely clear that this strategy works, since the polynomials $\mathbf{P}_i$ need not output a Boolean value when they err and hence it is not clear what taking a ``Majority vote'' means. Nevertheless, we observe that composing with the multilinear Majority polynomial continues to work since this polynomial has the nice property that setting more than half of its input bits to a constant $b\in \{0,1\}$ causes the polynomial to collapse to the constant polynomial $b$, which is oblivious to the values of the unset inputs (that could even be non-Boolean  and possibly arbitrarily large real numbers). 

As mentioned already above, \cref{thm:gen-err}, along with standard constructions of probablistic polynomials for $\AC^0(s,d)$ in the constant-error regime, directly proves \cref{thm:prob-polys}. We can more or less plug this result into Tal's proof~\cite{Tal2017} of Braverman's theorem to obtain better parameters for the amount of independence required to fool $\AC^0$. The only additional idea required is to ensure that the inputs where the probabilistic polynomial computes the correct value are certified by a small $\AC^0$ circuit. While a small $\AC^0$ circuit \emph{cannot} compute the Majority vote above, it turns out that a weaker ``Approximate Majority'' (see Definition~\ref{def:approx-maj} below) is sufficient for this purpose, and this can be done in $\AC^0$, as shown by Ajtai and Ben-Or\cite{AjtaiB1984}.

We now describe the proof of the degree lower bound for $\varepsilon$-error probabilistic polynomials computing the $\OR$ function on $n$ variables to a small constant-error (say $1/10$). It is known that this can be done over fields of \emph{constant} characteristic with constant degree~\cite{Razborov1987} and over the reals with degree $O(\log n)$~\cite{Tarui1993,BeigelRS1991}. Hence any technique for proving lower bounds growing with $n$ will have to use a technique specific to large characteristic.

The work of Razborov and Viola~\cite{RazborovV2013} introduced such a technique to the theoretical computer science literature to show that no low degree polynomial over the reals can compute the Parity function on more than half its inputs. The main technique was an anti-concentration lemma generalizing  classical theorems of Littlewood-Offord and Erd\H{o}s~\cite{LittlewoodO1938,Erdos1945} that state that any linear function of at least $r$ Boolean variables takes any fixed value on a uniformly random input with probability at most $O(1/\sqrt{r})$. In particular, it cannot approximate a Boolean function well unless $r$ is very small. Razborov and Viola, building on the work of Costello, Tao, and Vu~\cite{CostelloTV2006}, proved a generalization of this statement to low-degree multivariate polynomials that contain at least $r$ disjoint monomials of maximum degree. 

More recently, Meka, Nguyen, and Vu~\cite{MekaNV2016} proved an improved (and near-optimal) version of the anti-concentration lemma of Razborov and Viola and used this to show better lower bounds for the Parity function. Additionally, they were also able to show that any constant-error probabilistic polynomial for the $\OR$ function must have degree $\widetilde{\Omega}(\log \log n)$. We use their anti-concentration lemma with a more efficient restriction argument to prove a lower bound of $\widetilde{\Omega}(\sqrt{\log n})$. We describe the outline of this restriction argument next.

To prove a lower bound of $D$ on the probabilistic degree of some function it suffices (and is also necessary, by standard duality arguments) to obtain a distribution under which the function is hard to approximate by any polynomial of degree less than $D$. While some functions have `obvious' hard distributions (such as the Parity function, which is random self-reducible w.r.t. the uniform distribution), the $\OR$ function is not one such, since it takes value $0$ only on one input. Some obvious candidates (such as the uniform distribution or a convex combination of the uniform distribution along with the distribution that puts all its mass on the all $0$s input) can actually be shown to be easy for the $\OR$ function. The hard distribution we use is motivated by the polynomial constructions of~\cite{BeigelRS1991,Tarui1993} and is as follows: with probability $1/2$ choose the all $0$s input and with probability $1/2$ choose a uniformly random $i\in [\log n]$ and then choose a random input of weight\footnote{We will actually use the product distribution where each bit is set to $1$ with probability $\frac{1}{2^i}$, which puts most of its mass on inputs of weight \emph{close} to $n/2^i$, but we blur this distinction here.} $n/2^i$. The hard distribution chosen by Meka et al. is similar, but sparser than the distribution we use (it is only concentrated on $\log \log n$ levels of the hypercube whereas our distribution is concentrated on $\log n$ levels). 

We now argue that any polynomial $q$ approximating the $\OR$ function w.r.t. this distribution must be of large degree as follows. First of all, since there is a considerable amount of mass on the all $0$s input, we can assume that $q$ takes value $0$ on this input. Now, we consider the distribution that is uniformly distributed on inputs of Hamming weight $n/2$. We know that the $\OR$ function is always $1$ on these inputs, which means that 
$q$ is \emph{not} anti-concentrated on inputs from this distribution (since it must take the value $1$ most of the time). Hence, by the anti-concentration lemma due to Meka et al., any maximal disjoint set of maximum degree monomials in $q$ cannot have too many monomials, say more than $r$. In particular, setting all the variables $V$ in such a set of monomials --- there are at most $rD$ variables in $V$ --- to $0$ reduces the degree of the polynomial by $1$. The important observation is that this naturally happens with high probability when we use the distribution that is uniformly distributed on inputs of weight $\approx n/rD$, since each variable is set to $1$ only with probability $\approx 1/rD$. Further, we can simulate the uniform distribution on (say) inputs of weight $n/rD$ by first sampling a set $S$ of size $2n/rD$ and setting the bits outside $S$ to $0$ --- this sets all the variables in $V$ with good probability and thus reduces the degree of $q$ --- and then choosing a random set of $|S|/2$ inputs to set to $1$. We are now exactly in the situation we were at the beginning of this paragraph, except for the fact that the degree of $q$ is smaller.

Continuing in this way, we eventually obtain a constant polynomial $q$ that computes the $\OR$ function on some non-zero inputs from the hypercube, which means that it must be the constant polynomial $1$. However, this contradicts the fact that $q$ takes value $0$ on the all $0$s input and this proves the theorem.

\section{Improved probabilistic polynomials and PRGs for $\AC^0$}

\subsection{The construction of probabilistic polynomials}

\paragraph{Notation.} Let $P\in \mathbb{R}[x_1,\ldots,x_\ell]$. Given a set $S\subseteq [\ell]$ and a partial assignment $\sigma:S\rightarrow \{0,1\}$, we define $P|_{\sigma}$ to be the polynomial obtained by setting all the bits in $S$ according to $\sigma$. In the case that $\sigma$ sets all the variables in $S$ to a constant $b\in \{0,1\}$, we use $P|_{S\mapsto b}$ instead of $P|_\sigma$. For a function $f:\{0,1\}^\ell\rightarrow \{0,1\}$, we define $f|_\sigma$ and $f|_{S\mapsto b}$ similarly.

We define the \emph{weight} of $P$, denoted $w(P)$, to be the sum of the absolute values of all the coefficients of $P$.

\begin{definition}
\label{def:pseudo-maj}
Let $P\in\mathbb{R}[x_1,\ldots,x_\ell]$ and say $r$ is a parameter from $[\ell]$. We say that $P$ is an $\ell$-\emph{pseudo-majority} if for $r$ being the least integer greater than $\ell/2$ and any $S\in \binom{[\ell]}{r}$ and $b\in \{0,1\}$, the polynomial $P|_{S\mapsto b}$ is the constant polynomial $b$.
\end{definition}	

We show below that the multilinear polynomial representing the Majority function is an $\ell$-pseudo-majority of weight $2^{O(\ell)}$.


Before we prove that this construction works, we need a few standard facts about polynomials.

\begin{fact}
\label{fac:polys}
Any Boolean function $f:\{0,1\}^\ell\rightarrow \{0,1\}$ can be represented uniquely by a multilinear polynomial $P[x_1,\ldots,x_\ell]$ in the sense that for all $a\in \{0,1\}^n$, we have $P(a) = f(a)$. Furthermore, $w(P)= 2^{O(\ell)}$.
\end{fact}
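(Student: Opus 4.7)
The plan is to prove the fact by exhibiting an explicit multilinear polynomial for $f$ via Lagrange-style interpolation, then showing uniqueness by a dimension argument, and finally bounding the weight by expanding the interpolation formula.

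For existence, I would define, for each $a\in\{0,1\}^\ell$, the indicator polynomial
\[
\chi_a(x_1,\ldots,x_\ell) \defeq \prod_{i: a_i=1} x_i \cdot \prod_{i: a_i=0} (1-x_i),
\]
which is multilinear, evaluates to $1$ on input $a$, and to $0$ on every other point of $\{0,1\}^\ell$. Setting $P \defeq \sum_{a:\,f(a)=1} \chi_a$ then gives a multilinear polynomial with $P(a)=f(a)$ for every $a\in\{0,1\}^\ell$.

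For uniqueness, I would argue by a dimension count: the space of multilinear polynomials in $\ell$ variables over $\mathbb{R}$ has dimension $2^\ell$ (one basis monomial $\prod_{i\in T} x_i$ per subset $T\subseteq[\ell]$), and the evaluation map to $\mathbb{R}^{\{0,1\}^\ell}$ is linear. The $2^\ell$ polynomials $\{\chi_a\}_{a\in\{0,1\}^\ell}$ map to the standard basis of $\mathbb{R}^{\{0,1\}^\ell}$, so the evaluation map is surjective, hence (by equal dimensions) a bijection. Consequently, any two multilinear polynomials agreeing on $\{0,1\}^\ell$ must coincide as polynomials.

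For the weight bound, I would use the fact that weight is submultiplicative under products and subadditive under sums. Each factor $(1-x_i)$ has weight $2$ and each factor $x_i$ has weight $1$, so $w(\chi_a) \leq 2^\ell$. Summing over the at most $2^\ell$ inputs with $f(a)=1$ gives
\[
w(P) \;\leq\; \sum_{a:\,f(a)=1} w(\chi_a) \;\leq\; 2^\ell\cdot 2^\ell \;=\; 2^{O(\ell)},
\]
completing the proof. There is no genuine obstacle here; the only mild subtlety is being careful that the uniqueness claim is for multilinear representations (any non-multilinear polynomial that agrees on $\{0,1\}^\ell$ can be reduced using $x_i^2 = x_i$ to yield the same multilinear $P$, which is consistent with the statement).
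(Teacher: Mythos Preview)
Your proof is correct and entirely standard. The paper does not actually prove this statement: it is presented as a known ``Fact'' with no accompanying argument, so there is no proof in the paper to compare against. Your Lagrange-interpolation construction, dimension-count uniqueness, and $w(P)\le 2^\ell\cdot 2^\ell = 4^\ell$ weight estimate are exactly the sort of justification one would supply if asked to fill in the details.
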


The uniqueness in the fact above yields the following  observation. 

\begin{lemma}
\label{lem:cert}
Let $f:\{0,1\}^\ell \rightarrow\{0,1\}$ and $P$ be the corresponding unique multilinear polynomial guaranteed by \cref{fac:polys}. If $\sigma:S\rightarrow \{0,1\}$ is a partial assignment such that $f|_\sigma$ is the constant function $b\in \{0,1\}$, then $P|_\sigma$ is formally the constant polynomial $b$.
\end{lemma}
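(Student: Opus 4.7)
The plan is to invoke the uniqueness clause of \cref{fac:polys} applied to the restricted function $f|_\sigma$ defined on the variables outside $S$. The restriction $P|_\sigma$ is obtained from a multilinear polynomial $P$ by fixing some of its variables to constants, so $P|_\sigma$ is itself a multilinear polynomial in the remaining variables $x_i$ for $i\in[\ell]\setminus S$. Moreover, for every Boolean assignment $\tau:[\ell]\setminus S\to\{0,1\}$, the combined assignment $(\sigma,\tau)$ lies in $\{0,1\}^\ell$, and the hypothesis that $P$ represents $f$ gives $P|_\sigma(\tau) = P(\sigma,\tau) = f(\sigma,\tau) = f|_\sigma(\tau)$. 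Hence $P|_\sigma$ is a multilinear polynomial representing $f|_\sigma$ in the Boolean sense of \cref{fac:polys}.

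Next I would use the assumption that $f|_\sigma \equiv b$. The constant polynomial $b$ is trivially multilinear and also represents $f|_\sigma$ on $\{0,1\}^{[\ell]\setminus S}$. The uniqueness clause of \cref{fac:polys} now forces $P|_\sigma$ and the constant polynomial $b$ to be formally equal as polynomials, which is exactly the desired conclusion.

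The only nontrivial step is ensuring that $P|_\sigma$ is genuinely multilinear: this is immediate because substituting constants for a subset of the variables of a multilinear polynomial cannot introduce higher powers in the remaining variables. Once this is observed, the lemma reduces to a one-line application of the uniqueness in \cref{fac:polys}; there is no real obstacle.
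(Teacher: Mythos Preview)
Your proposal is correct and follows exactly the same line as the paper's proof: observe that $P|_\sigma$ is a multilinear polynomial representing the constant function $b$ on the remaining variables, and then invoke the uniqueness clause of \cref{fac:polys}. Your write-up simply spells out in more detail why $P|_\sigma$ is multilinear and why it agrees with $f|_\sigma$ on the Boolean cube, but the argument is the same.
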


\begin{proof}
Follows from the fact that $P|_\sigma$ is a multilinear polynomial representing the constant function $b$ on the variables not in $S$ and the uniqueness part of \cref{fac:polys}.
\end{proof}

\begin{remark}
\label{rem:anyreal}
Note that the hypothesis of the lemma above is that $f|_\sigma(a) = b$ for all Boolean assignments $a$ to the remaining variables. However, the conclusion yields a stronger conclusion for the polynomial $P$: namely, we show that $P|_\sigma$ takes value $b$ on \emph{any} assignment $a\in \mathbb{R}^{\ell-|S|}$ to the remaining variables, and not just Boolean assignments. It is this fact that we will use in applications below.
\end{remark}

For $\ell\in\mathbb{N}$, define the Boolean function $M_\ell$ to be the Majority function: i.e., $M_\ell(x) = 1$ iff the Hamming weight of $x$ is strictly greater than $\ell/2$. Note that for any $S\subseteq [\ell]$ of size greater than $\ell/2$ and any $b\in \{0,1\}$, $M_\ell|_{S\mapsto b}$ is the constant function $b$.

Let $P_\ell$ be the multilinear polynomial representing $M_\ell$ guaranteed by \cref{fac:polys}. Applying \cref{lem:cert} to the pair $M_\ell$ and $P_\ell$, we obtain the following corollary.

\begin{corollary}
\label{cor:pseudo-maj}
For any $\ell\in\naturals$, there exist $\ell$-pseudo-majorities of degree $\ell$ and weight $2^{O(\ell)}$.
\end{corollary}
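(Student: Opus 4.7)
The plan is to use the multilinear polynomial representation of the majority function $M_\ell$ as the desired pseudo-majority, and then verify each requirement in turn using the machinery already set up.

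First, I would take $P_\ell$ to be the unique multilinear polynomial representing $M_\ell$, guaranteed by \cref{fac:polys}. Since $P_\ell$ is multilinear in $\ell$ variables, its degree is automatically at most $\ell$, and the same fact gives $w(P_\ell) = 2^{O(\ell)}$, which handles the degree and weight claims immediately.

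The only remaining step is to check the pseudo-majority condition of \cref{def:pseudo-maj}. Let $r$ be the least integer greater than $\ell/2$, and fix any $S \in \binom{[\ell]}{r}$ and $b \in \{0,1\}$. I would first observe, at the combinatorial level, that once more than half of the input bits of $M_\ell$ are pinned to the same value $b$, the outcome of the majority is determined: $M_\ell|_{S\mapsto b}$ is the constant Boolean function $b$ on the remaining variables. This is just the defining property of majority.

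Then I would invoke \cref{lem:cert} on the pair $(M_\ell, P_\ell)$ with the partial assignment $\sigma$ that sends every element of $S$ to $b$. The lemma promotes this Boolean-level equality to a formal equality of polynomials, yielding that $P_\ell|_{S\mapsto b}$ is the constant polynomial $b$ (not just a polynomial that evaluates to $b$ on Boolean inputs, cf.\ \cref{rem:anyreal}). This is exactly the condition required by \cref{def:pseudo-maj}, so $P_\ell$ is the desired $\ell$-pseudo-majority. There is no real obstacle here: the whole corollary is a direct assembly of \cref{fac:polys}, \cref{lem:cert}, and the elementary property of $M_\ell$.
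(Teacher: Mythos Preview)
Your proposal is correct and follows exactly the paper's own argument: take $P_\ell$ to be the multilinear polynomial for $M_\ell$ from \cref{fac:polys}, observe that setting more than half the inputs to $b$ forces $M_\ell$ to be constant $b$, and apply \cref{lem:cert} to upgrade this to a formal polynomial identity. There is nothing to add.
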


We now prove \cref{thm:gen-err}. We will follow   the proof of~\cite[Lemma 10]{KoppartyS2012}, but some additional justification will be required since we are working over the reals and not over $\mathbb{F}_2$ as in \cite{KoppartyS2012}.

\begin{proof}[Proof of \cref{thm:gen-err}] 
We set $\ell = \frac{A}{\delta^2}\log(\frac{1}{\varepsilon})$ for a constant $A>0$ to be fixed later. Let $\mathbf{P}_1,\ldots,\mathbf{P}_\ell$ be $\ell$ mutually independent copies of the probabilistic polynomial $\mathbf{P}$. Fix an $\ell$-pseudo-majority $Q$ as guaranteed by \cref{cor:pseudo-maj}. The final probabilistic polynomial is $\mathbf{R} = Q(\mathbf{P}_1,\ldots,\mathbf{P}_\ell)$. 

The degree of $\mathbf{R}$ is at most $\deg(Q)\cdot \deg(\mathbf{P}) \leq O(\frac{D}{\delta^2}\log(\frac{1}{\varepsilon}))$. Moreover, it can be seen that the $\Vert \mathbf{R}\Vert_\infty\leq w(Q)\cdot L^{\deg(Q)} \leq (2L)^{O(\ell)}\leq L^{O(\ell)}$ since $L\geq 2$. 

Finally, we see that for any $a\in \{0,1\}^n$, $\mathbf{R}(a) = f(a)$ unless it holds that for at least $\lfloor \ell/2\rfloor$ many $i\in [\ell]$, we have $\mathbf{P}_i(a)\neq f(a)$. By a Chernoff bound, the probability of this is at most $\varepsilon$ as long as $A$ is chosen to be a suitably large constant. Hence, $\mathbf{R}$ is indeed an $\varepsilon$-error probabilistic polynomial for $f$. 
\end{proof}

\cref{thm:prob-polys} immediately follows from the above and standard probabilistic polynomials for $\AC^0$ from~\cite{TodaO1992,Tarui1993,BeigelRS1991}. However, for our applications to PRGs for $\AC^0$, we need a slightly stronger statement, which we prove below.

\newcommand{\bmc}[1]{\bm{\mc{#1}}}
\begin{definition}[Probabilistic polynomial with witness]
An $\varepsilon$-error probabilistic polynomial for circuit $C(x_1,\ldots,x_n)$ with witness ($\varepsilon$-error PPW for short) is a pair $(\mathbf{P},\bm{\mc{E}})$ of random variables such that $\mathbf{P}$ is a randomized polynomial  and $\bm{\mc{E}}$ is a randomized circuit (both on $n$ Boolean variables) such that for any input $a\in \{0,1\}^n$, we have
\begin{itemize}
\item  $\prob{\bmc{E}}{\bmc{E}(a) = 1} \leq \varepsilon$,
\item For any fixing $(P,\mc{E})$ of $(\mathbf{P},\bmc{E})$, we have $\mc{E}(a) = 0 \Rightarrow P(a) = C(a)$.
\end{itemize}
In particular, this implies that $\mathbf{P}$ is an $\varepsilon$-error probabilistic polynomial for $C$. 

We say that $\bmc{E}$ belongs to a circuit class $\mc{C}$ if it is supported on circuits from class $\mc{C}$.
\end{definition}

The above notion was introduced in Braverman~\cite{Braverman2010} who proved the following lemma, building on earlier works of~\cite{TodaO1992,Tarui1993, BeigelRS1991}.
\begin{lemma}[{\cite[Lemma 8, Proposition 9]{Braverman2010}}]
\label{lem:Tarui-Brav}
Fix parameters $s,d\in\naturals$ and $\varepsilon > 0$. Any $\AC^0$ circuit $C$ of size $s$ and depth $d$ has an $\varepsilon$-error PPW $(\mathbf{P},\bmc{E})$ where
\begin{itemize}
\item $\deg(\mathbf{P}) \leq (\log(s/\varepsilon))^{O(d)}$ and $\Vert \mathbf{P}\Vert_\infty \leq \exp\left((\log (s/\varepsilon)\right)^{O(d)})$,
\item $\bmc{E}\in \AC^0(\poly(s\log(1/\varepsilon)),d+3)$.
\end{itemize}
\end{lemma}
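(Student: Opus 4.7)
The plan is to follow the classical Toda--Tarui--Beigel--Reingold--Spielman construction of probabilistic polynomials for $\AC^0$, but to carry along an explicit Boolean witness $\bmc{E}$ of the failure event at every step. I would proceed by induction on the depth $d$, handling a single OR (symmetrically AND) gate as the base case and composing gate-wise for the inductive step.

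For the base case, consider a single OR gate on $m$ inputs with target error $\eta$ (to be set to $\varepsilon/s$ later). The BRS construction picks $O(\log m)$ random subsets $S_j\subseteq [m]$ of geometrically decreasing expected sizes and produces a polynomial $P$ of degree $O(\log m\cdot \log(1/\eta))$ such that, if every relevant $S_j$ ``isolates'' a 1-coordinate in the intended way, then $P$ outputs the correct value of the OR on Boolean inputs. The point is that the failure event is itself Boolean-checkable by a small $\AC^0$ circuit $\bmc{E}$ of depth at most 3 (an OR over subsets of a small constant-depth predicate testing whether the subset has the desired Hamming weight profile). I would package this into a PPW $(\mathbf{P}_{\mathrm{OR}},\bmc{E}_{\mathrm{OR}})$ with the guarantees: error $\leq \eta$, degree $O(\log m\cdot\log(1/\eta))$, $L_\infty$ norm $\exp(O(\log m\log(1/\eta)))$, and $\bmc{E}_{\mathrm{OR}}$ an $\AC^0$ circuit of size $\poly(m,\log(1/\eta))$ and depth 3.

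For the inductive composition, set the per-gate error to $\eta = \varepsilon/s$ and equip every gate $g$ of $C$ with a fresh, independent PPW $(\mathbf{P}_g,\bmc{E}_g)$ from the base case. The random polynomial $\mathbf{P}$ for $C$ is defined by the natural bottom-up substitution, using \cref{lem:cert} and \cref{rem:anyreal} to ensure that, whenever all gate-level witnesses $\bmc{E}_g$ evaluate to $0$ at input $a$, the substituted polynomials at all intermediate wires produce exactly the correct Boolean value (not merely a value that agrees on $\{0,1\}$), so the final value $\mathbf{P}(a)$ agrees with $C(a)$. The aggregate witness is simply
\[
\bmc{E}(a) \;=\; \bigvee_{g\in C}\bmc{E}_g(a).
\]
A union bound over at most $s$ gates gives the required $\varepsilon$-error guarantee. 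Degrees multiply across the $d$ layers, giving $\deg(\mathbf{P})\leq (\log(s/\varepsilon))^{O(d)}$; norms similarly behave as $\exp((\log(s/\varepsilon))^{O(d)})$.

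The main bookkeeping issue, which I view as the primary obstacle, is bounding the depth of the aggregate witness circuit $\bmc{E}$. Each $\bmc{E}_g$ is a depth-3 $\AC^0$ circuit over the \emph{original} inputs $x_1,\ldots,x_n$ (it needs to examine only the subcircuit feeding into $g$, and those subcircuits themselves have depth at most $d$; but since $\bmc{E}_g$ only inspects random subsets of wires and evaluates small Hamming-weight predicates, it can be implemented directly on $x$ in depth $d+2$ by inlining the subcircuit). Taking the OR over all $g$ adds one more level, giving depth $d+3$ overall, and the total size is $\poly(s,\log(1/\varepsilon))$ because each $\bmc{E}_g$ has size $\poly(s,\log(1/\varepsilon))$ and there are at most $s$ gates. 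This matches the claimed parameters.
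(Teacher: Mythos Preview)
The paper does not give its own proof of this lemma; it is quoted as a black box from Braverman~\cite{Braverman2010} (Lemma~8 and Proposition~9 there). Your sketch is essentially the standard argument from that source: equip each gate with the BRS/Tarui randomized polynomial for $\OR$ together with a constant-depth Boolean failure detector, set the per-gate error to $\varepsilon/s$, compose bottom-up, and let the global witness be the disjunction of all gate witnesses composed with the corresponding subcircuits of $C$. The degree, $L_\infty$, size, and depth bookkeeping you outline is correct and yields the stated bounds.

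One minor point: your appeal to \cref{lem:cert} and \cref{rem:anyreal} is misplaced. Those facts are about a multilinear polynomial becoming \emph{formally} constant once a certificate is fixed, which is needed in \cref{lem:PPW} because the pseudo-majority is fed possibly non-Boolean values $\mathbf{P}_i(a)$. In plain gatewise composition no such issue arises: if all gate witnesses below $g$ vanish at $a$, then by induction every wire polynomial feeding into $g$ evaluates to the corresponding \emph{Boolean} wire value of $C$, so $P_g$ is only ever evaluated on $\{0,1\}^m$ and the base-case guarantee applies directly. You can simply drop that citation.
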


We show the following variant of the above lemma, which is an improvement in terms of degree and the $L_\infty$ norm of the probabilistic polynomial for small $\varepsilon$.
\begin{lemma}
\label{lem:PPW}
Fix parameters $s,d\in\naturals$ and $\varepsilon > 0$. Any $\AC^0$ circuit $C$ of size $s$ and depth $d$ has an $\varepsilon$-error PPW $(\mathbf{P},\bmc{E})$ where
\begin{itemize}
\item $\deg(\mathbf{P}) \leq (\log s)^{O(d)}\cdot \log(1/\varepsilon)$ and $\Vert \mathbf{P}\Vert_\infty \leq \exp\left((\log s)^{O(d)}\log(1/\varepsilon)\right)$,
\item $\bmc{E}\in \AC^0(\poly(s\log(1/\varepsilon)),d+O(1))$.
\end{itemize}
\end{lemma}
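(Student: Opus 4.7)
The plan is to follow the error-reduction strategy in the proof of \cref{thm:gen-err} while carrying the witness circuit through the construction. First I would apply \cref{lem:Tarui-Brav} with constant error $1/10$ to obtain a PPW $(\mathbf{P}_0,\bmc{E}_0)$ for $C$ in which $\mathbf{P}_0$ has degree $(\log s)^{O(d)}$ and $L_\infty$ norm at most $L := \exp((\log s)^{O(d)})$, and $\bmc{E}_0 \in \AC^0(\poly(s), d+3)$. I would then take $\ell = A\log(1/\varepsilon)$ mutually independent copies $(\mathbf{P}_1,\bmc{E}_1),\ldots,(\mathbf{P}_\ell,\bmc{E}_\ell)$ of this PPW, with $A$ a large constant to be fixed by Chernoff below, and define $\mathbf{R} := Q(\mathbf{P}_1,\ldots,\mathbf{P}_\ell)$, where $Q$ is the $\ell$-pseudo-majority from \cref{cor:pseudo-maj}. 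The degree and $L_\infty$ bounds on $\mathbf{R}$ then follow verbatim from the computation in the proof of \cref{thm:gen-err}.

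The new content is the construction of the witness $\bmc{E}$. By the pseudo-majority property together with \cref{rem:anyreal}, we have $\mathbf{R}(a) = C(a)$ whenever at least $r := \lfloor \ell/2\rfloor + 1$ of the $\mathbf{P}_i(a)$'s equal $C(a)$; and by the PPW property each $\bmc{E}_i(a) = 0$ already forces $\mathbf{P}_i(a) = C(a)$. Hence it suffices to build $\bmc{E}$ satisfying (i) $\bmc{E}(a) = 0$ implies that fewer than $r$ of the $\bmc{E}_i(a)$'s equal $1$, and (ii) $\prob{\bmc{E}}{\bmc{E}(a) = 1} \leq \varepsilon$. Since each $\bmc{E}_i(a)$ equals $1$ with probability at most $1/10$ and the $\ell$ copies are independent, a Chernoff bound shows that fewer than $\ell/5$ of them fire except with probability at most $\varepsilon$, provided $A$ is taken sufficiently large.

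The naive realization of such an $\bmc{E}$ as an exact threshold of the $\bmc{E}_i$'s would need $\AC^0$-size $\binom{\ell}{\lceil\ell/2\rceil} = 2^{\Theta(\ell)} = \poly(1/\varepsilon)$, which is too large for the required $\poly(s\log(1/\varepsilon))$ bound. Instead, I would compose the $\bmc{E}_i$'s with Ajtai's constant-depth, polynomial-size $\AC^0$ circuit $\mathrm{AM}_\ell$ for \emph{approximate majority}, with thresholds tuned so that $\mathrm{AM}_\ell$ outputs $0$ on inputs of Hamming weight at most $\ell/5$ and $1$ on inputs of Hamming weight at least $2\ell/5$. Setting $\bmc{E} := \mathrm{AM}_\ell(\bmc{E}_1,\ldots,\bmc{E}_\ell)$, property (i) holds because $\bmc{E}(a) = 0$ forces fewer than $2\ell/5 < r$ of the $\bmc{E}_i(a)$'s to equal $1$ (for all sufficiently large $\ell$), and (ii) follows from the Chernoff estimate above. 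The size of $\bmc{E}$ is then $\ell\cdot\poly(s) + \poly(\ell) = \poly(s\log(1/\varepsilon))$ and its depth is $(d+3) + O(1) = d + O(1)$, as required.

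The principal obstacle in this proof is precisely the last circuit-complexity step: because $\varepsilon$ can be much smaller than $1/\poly(s)$, we cannot afford a $\poly(1/\varepsilon)$-size blowup in $\bmc{E}$, which rules out the obvious exact-threshold DNF construction. The reason approximate majority is admissible here is that our Chernoff analysis places the typical count of firing $\bmc{E}_i$'s (around $\ell/10$) a constant factor below the pseudo-majority threshold (${\approx}\,\ell/2$), so any constant-gap approximate majority---which does exist in constant-depth polynomial size---suffices to separate the two regimes and close the argument.
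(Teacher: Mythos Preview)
Your proposal is correct and matches the paper's proof essentially step for step: both take $\ell = \Theta(\log(1/\varepsilon))$ independent copies of the constant-error PPW from \cref{lem:Tarui-Brav}, combine the polynomials via the $\ell$-pseudo-majority $Q$, and combine the witnesses via an Ajtai--Ben-Or approximate majority so that the witness circuit stays in $\AC^0(\poly(s\log(1/\varepsilon)), d+O(1))$ rather than blowing up to $\poly(1/\varepsilon)$ size. The only differences are immaterial choices of constants (error $1/10$ vs.\ $1/8$, thresholds $(1/5,2/5)$ vs.\ $(1/4,2/5)$), and your explicit discussion of why exact threshold fails is a nice addition.
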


Before we begin the proof, we state one more result from the literature. 

\begin{definition}
\label{def:approx-maj}
Given an integer parameter $\ell$ and real parameters $\alpha, \beta \in [0,1]$ with $\alpha < \beta$, we will call a function $f:\{0,1\}^\ell\rightarrow \{0,1\}$ an \emph{$(\ell,\alpha,\beta)$-approximate majority} if $f(x) = 0$ for any input of Hamming weight at most $\alpha\ell$ and $f(x) = 1$ for any input of Hamming weight at least $\beta\ell$. 
\end{definition}

The following is a result of Ajtai and Ben-Or~\cite{AjtaiB1984}.

\begin{lemma}[Ajtai and Ben-Or~\cite{AjtaiB1984}]
\label{lem:appxmaj}
Fix any constants $\alpha < \beta$. Then, for all $\ell\in\naturals$, there is an $(\ell,\alpha,\beta)$-approximate majority which has an $\AC^0$ circuit of size $\poly(\ell)$ and depth $3$.
\end{lemma}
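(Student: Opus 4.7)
The plan is to combine the constant-error PPW from \cref{lem:Tarui-Brav} with the pseudo-majority amplification used in the proof of \cref{thm:gen-err}, while simultaneously combining the witness circuits via the Ajtai--Ben-Or approximate majority of \cref{lem:appxmaj}. Concretely, I first apply \cref{lem:Tarui-Brav} with error parameter $1/10$ to obtain a PPW $(\mathbf{P}_0,\bmc{E}_0)$ for $C$ with $\deg(\mathbf{P}_0) \leq (\log s)^{O(d)}$, $\|\mathbf{P}_0\|_\infty \leq \exp((\log s)^{O(d)})$, and $\bmc{E}_0 \in \AC^0(\poly(s), d+3)$. Set $\ell = A\log(1/\varepsilon)$ for a sufficiently large constant $A$, and draw $\ell$ mutually independent copies $(\mathbf{P}_i,\bmc{E}_i)_{i \in [\ell]}$ of $(\mathbf{P}_0,\bmc{E}_0)$, where the randomness used to produce $\mathbf{P}_i$ and $\bmc{E}_i$ is coupled within each copy (preserving the per-copy PPW property) but independent across different copies.

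Next, I set $\mathbf{P} := Q(\mathbf{P}_1,\ldots,\mathbf{P}_\ell)$ for $Q$ an $\ell$-pseudo-majority of degree $\ell$ and weight $2^{O(\ell)}$ supplied by \cref{cor:pseudo-maj}, and $\bmc{E} := M(\bmc{E}_1,\ldots,\bmc{E}_\ell)$ for $M$ an $(\ell, 1/5, 2/5)$-approximate majority implemented as a depth-$3$, size-$\poly(\ell)$ $\AC^0$ circuit from \cref{lem:appxmaj}. To verify the PPW guarantee, let $r = \lfloor\ell/2\rfloor + 1$ be the threshold in \cref{def:pseudo-maj}. If $\bmc{E}(a) = 0$, then the approximate-majority property forces strictly fewer than $2\ell/5 < \ell/2$ indices $i$ to satisfy $\bmc{E}_i(a) = 1$, so this count is at most $\ell - r$, and hence at least $r$ of the $\bmc{E}_i(a)$ equal $0$. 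The per-copy PPW property then gives $\mathbf{P}_i(a) = C(a)$ for those $r$ indices, and \cref{lem:cert} together with \cref{rem:anyreal} applied to $Q|_{S \mapsto C(a)}$ yields $\mathbf{P}(a) = Q(\mathbf{P}_1(a),\ldots,\mathbf{P}_\ell(a)) = C(a)$ regardless of the remaining (possibly non-Boolean) values $\mathbf{P}_j(a)$.

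The remaining verifications are standard. The error probability is bounded by $\Pr[\#\{i : \bmc{E}_i(a) = 1\} \geq \ell/5]$, which by a Chernoff bound (with per-copy failure probability $\leq 1/10$) is at most $\varepsilon$ once $A$ is chosen sufficiently large. The polynomial parameters are inherited from the amplification in \cref{thm:gen-err}: $\deg(\mathbf{P}) \leq \ell \cdot \deg(\mathbf{P}_0) \leq (\log s)^{O(d)}\log(1/\varepsilon)$ and $\|\mathbf{P}\|_\infty \leq w(Q)\cdot\|\mathbf{P}_0\|_\infty^\ell \leq \exp((\log s)^{O(d)}\log(1/\varepsilon))$. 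The witness circuit $\bmc{E}$ is a depth-$3$ approximate majority applied to $\ell = O(\log(1/\varepsilon))$ copies of a depth-$(d+3)$, size-$\poly(s)$ circuit, and hence lies in $\AC^0(\poly(s\log(1/\varepsilon)),\, d+O(1))$, as required.

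The only real subtlety is aligning the approximate-majority thresholds with the pseudo-majority parameter $r$, which sits just above $\ell/2$. For the PPW implication to go through one needs $M(y) = 1$ whenever more than $\ell - r$ of the inputs $y_i$ are $1$, which forces $\beta$ to be strictly less than $1/2$; this is why I take $\beta = 2/5$ rather than the naive $\beta = 1/2$, and then pick $\alpha = 1/5 < \beta$ so that the Chernoff step still supplies the desired $\varepsilon$ tail. Apart from this bookkeeping, the construction is a simultaneous execution of the amplification in \cref{thm:gen-err} on the polynomial side and of Braverman's witness-combination on the $\AC^0$ side.
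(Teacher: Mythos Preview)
Your write-up does not prove the stated lemma at all. The statement you were asked to address is \cref{lem:appxmaj}, the classical Ajtai--Ben-Or result that for any constants $\alpha<\beta$ there is an $(\ell,\alpha,\beta)$-approximate majority computable by a depth-$3$ $\AC^0$ circuit of size $\poly(\ell)$. In the paper this lemma is simply cited from \cite{AjtaiB1984} and used as a black box; no proof is given. Your proposal never constructs such a circuit and instead \emph{invokes} \cref{lem:appxmaj} as a tool. What you have actually written is a proof of \cref{lem:PPW}, the subsequent lemma about $\varepsilon$-error PPWs for $\AC^0$ circuits.

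Read as a proof of \cref{lem:PPW}, your argument is correct and essentially identical to the paper's own proof: the paper also takes $\ell=\Theta(\log(1/\varepsilon))$ independent copies of a constant-error PPW from \cref{lem:Tarui-Brav}, combines the polynomials via the $\ell$-pseudo-majority $Q$ from \cref{cor:pseudo-maj}, and combines the witness circuits via an Ajtai--Ben-Or approximate majority, then checks the Chernoff bound for the error and the threshold alignment so that $\bmc{E}(a)=0$ forces at least $r$ of the $\mathbf{P}_i(a)$ to equal $C(a)$. The only differences are cosmetic parameter choices (the paper uses a $(1/8)$-error base PPW and an $(\ell,1/4,2/5)$-approximate majority, versus your $1/10$ and $(\ell,1/5,2/5)$), and your explicit appeal to \cref{rem:anyreal} for the non-Boolean inputs to $Q$, which the paper leaves implicit. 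So the content is fine, but it is attached to the wrong statement.
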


We now prove \cref{lem:PPW}. The proof is similar to that of \cref{thm:gen-err} above, but we also need to obtain a witness circuit for our probabilistic polynomial.

\begin{proof}[Proof of \cref{lem:PPW}]
Let $\ell = A\log(1/\varepsilon)$ for a large constant $A$ to be chosen later. W.l.o.g. assume that $\ell$ is even. Let $Q(x_1,\ldots,x_\ell)$ be the $\ell$-pseudo-majority guaranteed by \cref{cor:pseudo-maj}. By \cref{lem:appxmaj}, there is an $\AC^0$ circuit $C_1$ of size $\poly(\ell)$ and depth $3$ that computes an $(\ell,1/4,2/5)$-approximate majority.

Let $(\mathbf{P}_1,\bmc{E}_1),\ldots,(\mathbf{P}_\ell,\bmc{E}_\ell)$ be \emph{independent} copies of the $(1/8)$-error PPW guaranteed by \cref{lem:Tarui-Brav}. The final PPW is $(\mathbf{P},\bmc{E})$ where $\mathbf{P} = Q(\mathbf{P}_1,\ldots,\mathbf{P}_\ell)$ and $\bmc{E} = C_1(\bmc{E}_1,\ldots,\bmc{E}_\ell)$. We show that this PPW has the required properties. 

First of all, we know that on any input $a$ to the circuit $C$ and for any $i\in [\ell]$, the probability that $\bmc{E}_i(a) = 1$ is at most $1/8$. Thus, the expected number of $\bmc{E}_i$ that output $1$ is at most $\ell/8$. However, for $\bmc{E}(a)$ to be $1$, at least $\ell/4$ many $\bmc{E}_i(a)$ should be $1$. By a Chernoff bound, the probability of this event is at most $\exp(-\Omega(\ell)) < \varepsilon$ for a large enough constant $A$.

Now, we need to argue that if $\bmc{E}(a) = 0$, then $\mathbf{P}(a) = Q(\mathbf{P}_1(a),\ldots,\mathbf{P}_\ell(a)) = C(a)$. Say $C(a) = b\in \{0,1\}$. If $\bmc{E}(a) =0$, then we know that the number of $\bmc{E}_i(a)$ that are $0$ is at least $3\ell/5$; let $I$ denote the set of these $i$. By the definition of PPWs, we know that for each $i\in I$, we have $\mathbf{P}_i(a) = b$ and hence at least $3\ell/5 > \ell/2$ many inputs of $Q$ are set to $b$. Since $Q$ is an $\ell$-pseudo-majority, we must have $Q(\mathbf{P}_1(a),\ldots,\mathbf{P}_\ell(a)) = b$. This concludes the proof that $(\mathbf{P},\bmc{E})$ is indeed an $\varepsilon$-error PPW for $C$.

Note that $\deg(\mathbf{P})\leq \deg(Q)\cdot \max_i \deg(\mathbf{P}_i) \leq (\log s)^{O(d)}\log(1/\varepsilon)$. Also, it can be seen that
\[
\Vert \mathbf{P}\Vert_\infty \leq w(Q)\cdot(\max_{i\in [\ell]} \Vert \mathbf{P}_i \Vert_\infty)^{\deg(Q)} \leq \exp\left((\log s)^{O(d)}\log(1/\varepsilon)\right).
\]
Thus, $\mathbf{P}$ has the required properties. The size and depth properties of $\bmc{E}$ follow trivially from its definition. This concludes the proof of the lemma.
\end{proof}

\subsection{Application to PRGs for $\AC^0$}

The connection between probabilistic polynomials and PRGs for $\AC^0$ is encapsulated in the following theorem (which is an easy observation from the works of Braverman and Tal):
\begin{theorem}[Braverman~\cite{Braverman2010},Tal~\cite{Tal2017}]
\label{thm:bravtal}
Let $s,d\in\naturals$ and $\varepsilon > 0$. Suppose that any $\AC^0$ circuit of size $s$ and depth $d$ has an $(\varepsilon/2)$-error PPW $(\mathbf{P},\bmc{E})$ such that
\begin{itemize}
\item $\deg(\mathbf{P}) = D$, $\Vert \mathbf{P}\Vert_\infty \leq L$,
\item $\bmc{E}\in \AC^0(s_1,d_1)$, 
\end{itemize}
Then, $\AC^0$ circuits of size $s$ and depth $d$ can be $\varepsilon$-fooled by $k(s,d,\varepsilon)$-wise independence, where 
\[
k(s,d,\varepsilon) = O(D) + (\log s_1)^{O(d_1)}\cdot (\log (1/\varepsilon) + \log L)
\]
\end{theorem}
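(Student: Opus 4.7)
The plan is to follow the Braverman--Tal strategy: combine the hypothesized PPW $(\mathbf{P}, \bmc{E})$ with Tal's improved Linial--Mansour--Nisan $L_2$-Fourier tail bound applied to the witness circuit $\bmc{E}$. Fix $C\in \AC^0(s,d)$ with the given PPW and let $Y$ be a $k$-wise independent distribution over $\{0,1\}^n$. The starting point is the PPW identity
\[
C(a) \;=\; \mathbf{P}(a) + (C(a)-\mathbf{P}(a))\,\bmc{E}(a),
\]
valid pointwise for every realization of the PPW, since $\mathbf{P}$ and $C$ agree wherever $\bmc{E}$ vanishes. Averaging over the PPW randomness, set $\tilde{P}(a) \defeq \mathbb{E}_{\bmc{E}}[\mathbf{P}(a)]$; this is a deterministic polynomial of degree $D$, and is therefore fooled exactly by $D$-wise independence, so $\mathbb{E}_U[\tilde{P}] = \mathbb{E}_Y[\tilde{P}]$. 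The residual error $|C(a) - \tilde{P}(a)|$ is bounded pointwise by $(L+1)\Pr_{\bmc{E}}[\bmc{E}(a)=1]$.

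The main work is to control the residual fooling error $|\mathbb{E}_Y[C-\tilde{P}] - \mathbb{E}_U[C-\tilde{P}]|$. For this I would apply Tal's improved LMN theorem to $\bmc{E}\in \AC^0(s_1,d_1)$: for any $\delta>0$ there is a deterministic polynomial $R$ of degree $r \defeq (\log s_1)^{O(d_1)}\log(1/\delta)$ with $\|\bmc{E}-R\|_2 \leq \delta$. Choosing $\delta = \Theta(\varepsilon/L)$ gives $r = (\log s_1)^{O(d_1)}(\log(1/\varepsilon) + \log L)$. Substituting $R$ for $\bmc{E}$ in the PPW identity converts the error term into a polynomial expression of degree $O(D+r)$ (after also replacing $C$ by $\mathbf{P}$ inside the non-polynomial piece), which is fooled exactly by $k$-wise independence for $k \geq O(D+r)$. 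The substitution error per PPW realization is $(C-\mathbf{P})(\bmc{E}-R)$, which is bounded under the uniform distribution by Cauchy--Schwarz using $\|C-\mathbf{P}\|_\infty \leq L+1$ and $\|\bmc{E}-R\|_2 \leq \delta$, yielding a contribution of $O((L+1)\delta) = O(\varepsilon)$.

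The main obstacle will be carrying out the Cauchy--Schwarz step under the pseudorandom $Y$ rather than under uniform, since $\|\bmc{E}-R\|_2$ is defined via the uniform inner product. To handle this, I would expand $(\bmc{E}-R)^2 = \bmc{E} - 2R\,\bmc{E} + R^2$ (using that $\bmc{E}$ is Boolean) and fool each term separately under $Y$: $R^2$ is a polynomial of degree $2r$ and is fooled exactly; $\mathbb{E}_Y[\bmc{E}]$ is controlled by applying the $L_2$ tail bound directly to $\bmc{E}$ (so $\mathbb{E}_Y[\bmc{E}]$ is close to $\mathbb{E}_U[\bmc{E}] \leq \varepsilon/2$); and $\mathbb{E}_Y[R\,\bmc{E}]$ is handled by writing $R\bmc{E} = R^2 + R(\bmc{E}-R)$ and bounding the cross-term by a further Cauchy--Schwarz against $\mathbb{E}_Y[R^2]$ and $\mathbb{E}_Y[(\bmc{E}-R)^2]$, closing the estimate by a short bootstrapping argument. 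Summing all error contributions gives the total bound $O(\varepsilon)$, yielding $k = O(D) + (\log s_1)^{O(d_1)}(\log(1/\varepsilon) + \log L)$ as claimed.
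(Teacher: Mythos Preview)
The paper does not actually prove \cref{thm:bravtal}; it is stated as a black-box result extracted from Braverman~\cite{Braverman2010} and Tal~\cite{Tal2014}. So there is no ``paper's own proof'' to compare against, and your proposal must be evaluated on its own merits against what Braverman and Tal actually do.

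Your high-level architecture is exactly right: start from the pointwise PPW identity $C = \mathbf{P} + (C-\mathbf{P})\bmc{E}$, fool the degree-$D$ part $\mathbf{P}$ exactly, and handle the residual $(C-\mathbf{P})\bmc{E}$ using Tal's Fourier-tail bound on the $\AC^0(s_1,d_1)$ circuits $\bmc{E}$ and $C\wedge \bmc{E}$. This is the Braverman--Tal method.

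The gap is in your last paragraph. Your plan is to control the substitution error under $Y$ by bounding $a \defeq \mathbb{E}_Y[(\bmc{E}-R)^2]$ via bootstrapping, but this does not close. Writing $(\bmc{E}-R)^2 = \bmc{E} - 2\bmc{E}R + R^2$ and using $\mathbb{E}_Y[R]=\mathbb{E}_U[R]$, $\mathbb{E}_Y[R^2]=\mathbb{E}_U[R^2]$ (for $k\geq 2r$) together with the identity $\mathbb{E}_U[R]-\mathbb{E}_U[R^2]=\|\bmc{E}-R\|_{2,U}^2=:\eta^2$ yields $a = \eta^2 + \mathbb{E}_Y[(1-2R)(\bmc{E}-R)]$. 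Cauchy--Schwarz then gives $a \leq \eta^2 + \sqrt{\mathbb{E}_Y[(1-2R)^2]}\cdot\sqrt{a}$, and $\mathbb{E}_Y[(1-2R)^2]=\mathbb{E}_U[(1-2R)^2]=1-4\eta^2$ is essentially $1$, so you only get $a\leq \eta^2 + \sqrt{a}$, hence $a=O(1)$. That is far too weak: you need $(L+1)\sqrt{a}\leq O(\varepsilon)$, i.e.\ $a\leq (\varepsilon/L)^2$. No rearrangement of the Cauchy--Schwarz/bootstrapping steps fixes this, because the system is underdetermined: one identity and one inequality in the two unknowns $a$ and $\mathbb{E}_Y[R(\bmc{E}-R)]$ cannot force $a$ to be small. (The smallness of $\mathbb{E}_U[\bmc{E}]\leq \varepsilon/2$ makes $\mathbb{E}_Y[R^2]$ small, but the dominant term comes from pairing $\bmc{E}-R$ against the constant $1$, not against $R$.)

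What Braverman actually does avoids estimating anything under $Y$ altogether: he constructs \emph{sandwiching} polynomials $p_\ell\leq C\leq p_u$ of degree $O(D+r)$ with $\mathbb{E}_U[p_u-p_\ell]$ small, by adding nonnegative correction terms (squares of the $L_2$-errors of the Fourier truncations of $\bmc{E}$ and $C\wedge\bmc{E}$) to the naive approximation $\mathbf{P} - \mathbf{P}\cdot\bmc{E}^{\leq r} + (C\bmc{E})^{\leq r}$. Sandwiching polynomials immediately imply fooling by $k$-wise independence via the standard LP-duality argument (Bazzi's lemma), with no need to reason about $Y$ directly. This is the missing idea in your proposal; once you replace the bootstrapping paragraph with the sandwiching construction, the parameters fall out exactly as stated.
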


Note that the theorem above is trivial when $\log(1/\varepsilon) > s$ since any $\AC^0$ circuit of size $s$ is trivially fooled by an $s$-wise independent distribution. Hence, the theorem is non-trivial only when $\log(1/\varepsilon) \leq s$. In this case, using \cref{lem:PPW} and the theorem above, we immediately get

\begin{corollary}\label{cor:PRGs}
Fix parameters $s,d\in\naturals$ and $\varepsilon > 0$. Any circuit $C\in \AC^0(s,d)$ can be $\varepsilon$-fooled by any distribution that is $(\log s)^{O(d)}\log(1/\varepsilon)$-wise independent.
\end{corollary}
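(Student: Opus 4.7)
The plan is to obtain the corollary by direct substitution: feed the improved PPW of \cref{lem:PPW} into the Braverman--Tal connection (\cref{thm:bravtal}). As the excerpt already notes, we may assume $\log(1/\varepsilon)\leq s$, since otherwise any $s$-wise independent distribution trivially fools $\AC^0(s,d)$ and the conclusion is vacuous.

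First I would invoke \cref{lem:PPW} with error parameter $\varepsilon/2$ (note $\log(2/\varepsilon) = \Theta(\log(1/\varepsilon))$ once $\varepsilon < 1$) to obtain an $(\varepsilon/2)$-error PPW $(\mathbf{P},\bmc{E})$ for $C$ with the parameters
\[
D = (\log s)^{O(d)}\log(1/\varepsilon), \qquad \log L = (\log s)^{O(d)}\log(1/\varepsilon),
\]
while $\bmc{E}\in \AC^0(s_1,d_1)$ with $s_1 = \poly(s\log(1/\varepsilon))$ and $d_1 = d+O(1)$. Next I would substitute these into the bound from \cref{thm:bravtal}, which yields
\[
k(s,d,\varepsilon) = O(D) + (\log s_1)^{O(d_1)}\cdot\bigl(\log(1/\varepsilon) + \log L\bigr).
\]

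The final step is a routine simplification using $\log(1/\varepsilon)\leq s$. Under this assumption $\log s_1 = O(\log s + \log\log(1/\varepsilon)) = O(\log s)$, so $(\log s_1)^{O(d_1)} = (\log s)^{O(d)}$. Plugging in our values of $D$ and $\log L$, the right-hand side collapses to $(\log s)^{O(d)}\log(1/\varepsilon)$, which is exactly the claimed bound on the amount of independence.

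Because every input is supplied by a lemma or theorem already stated in the excerpt, there is no real obstacle: the only care required is verifying that the $\poly(s\log(1/\varepsilon))$ blow-up in the witness circuit size $s_1$ and the constant-factor growth in depth $d_1$ do not spoil the final exponent. The observation that $\log\log(1/\varepsilon) = O(\log s)$ in the non-trivial regime handles this cleanly, so the argument amounts to little more than a careful bookkeeping step following the two substitutions above.
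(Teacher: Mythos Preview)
Your proposal is correct and follows exactly the approach of the paper: reduce to the regime $\log(1/\varepsilon)\leq s$, apply \cref{lem:PPW} with error $\varepsilon/2$, and plug the resulting parameters into \cref{thm:bravtal}. The paper in fact gives fewer details than you do, simply asserting that the corollary follows immediately from these two ingredients in the non-trivial regime.
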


\section{The probabilistic degree of OR}

\paragraph{Notation.} For $i\geq 1$ and a set of Boolean variables $X$, let $\mu_i^X$ be the product distribution on $\{0,1\}^X$ defined so that for each $x\in X$, the probability that $x = 1$ is $2^{-i}$. We also use $\mc{U}_X$ to denote $\mu_1^X$, the uniform distribution over $\{0,1\}^X$. The OR function on the variables in $X$ is denoted $\OR_X$.

We want to show:

\begin{theorem}
\label{thm:OR-lbd}
Assume $|X_0| = n$. The $1/8$-error probabilistic degree of $\OR_{X_0}$ is $\Omega(\frac{\sqrt{\log n}}{(\log \log n)^{3/2}})$.
\end{theorem}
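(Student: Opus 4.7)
The plan is to use LP duality (minimax) to convert the probabilistic lower bound into the construction of a hard input distribution. Since an $\varepsilon$-error probabilistic polynomial is just a distribution over deterministic polynomials satisfying a per-input agreement condition, minimax yields the following equivalence: the $1/8$-error probabilistic degree of $\OR_{X_0}$ exceeds $D$ if and only if there exists a distribution $\mu$ on $\{0,1\}^{X_0}$ such that every deterministic real polynomial $P$ of degree at most $D$ has $\Pr_{x \sim \mu}[P(x) \neq \OR_{X_0}(x)] > 1/8$. The task then reduces to constructing such a $\mu$ for $D$ just below $\sqrt{\log n}/(\log\log n)^2$.

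The combinatorial engine is a rigidity property for low-degree polynomials on characteristic vectors. Writing $\mathbf{1}_T \in \{0,1\}^{X_0}$ for the characteristic vector of $T \subseteq X_0$, suppose $P$ has degree $D$ and agrees with $\OR_{X_0}$ on every $\mathbf{1}_T$ with $|T|\leq D$. Möbius inversion then forces the Fourier coefficients $\hat P_S$ (for $|S| \leq D$) to coincide with those of $\OR_{X_0}$, which via the identity $\sum_{k=0}^{D+1}\binom{D+1}{k}(-1)^k = 0$ forces $P(\mathbf{1}_T) = 1 + (-1)^{D+1} \in \{0,2\}$, never the correct value $1$, for every $T$ of size $D+1$. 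So in the \emph{exact} regime, no degree-$D$ polynomial can agree with $\OR_{X_0}$ on all characteristic vectors of weight up to $D+1$. My candidate $\mu$ would therefore be a mixture of the product distributions $\mu_i^{X_0}$ for $i$ in a window around $\log n$, concentrating mass on inputs of various small Hamming weights; after symmetrizing $P$ over permutations of $X_0$ (reducing it to a univariate polynomial of degree $D$ in the Hamming weight), a robust version of the rigidity above should force an $\Omega(1)$ disagreement with $\OR_{X_0}$ under $\mu$.

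The quantitative bound should emerge from an iterative restriction scheme: starting from a hypothetical $\varepsilon$-error probabilistic polynomial $\mathbf{P}$ of degree $D$ for $\OR_{X_0}$, one restricts a random subset of variables sampled according to $\mu_i^{X_0}$ for an appropriately chosen $i$, obtaining a probabilistic polynomial of degree $\leq D$ for $\OR$ on a smaller variable set while paying only a mild loss in the error parameter. Iterating on the order of $D$ times should land in a regime where the rigidity above yields a contradiction. The main obstacle, and the likely source of the $(\log\log n)^2$ overhead, is that probabilistic polynomials over $\mathbb{R}$ carry no a priori $L^\infty$ bound: when $\mathbf{P}$ errs it can output an arbitrary real, so the standard Markov/Bernstein inequalities on bounded low-degree polynomials do not apply. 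One must instead work with the $0/1$ indicator of agreement with $\OR_{X_0}$, which destroys polynomial structure at the error points and forces the combinatorial rigidity above to be strengthened into a \emph{robust} version that tolerates approximate (rather than exact) agreement on each weight level. Establishing this robust rigidity with the right quantitative parameters, and tracking its degradation across the iterative restrictions, is what I expect to be the technical heart of the proof.
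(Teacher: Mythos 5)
Your reduction via LP duality to a hard distribution is fine (the paper only needs the easy averaging direction: it mixes the product distributions $\mu_i^{X_0}$ over a window of $i$ and fixes one polynomial $q_0$ from the support that is good on average and vanishes at $\mathbf{0}$). But the argument has a genuine gap at exactly the point you flag as the ``technical heart'': the robust rigidity lemma is not established, and the route you sketch for it --- symmetrize over permutations and run a robust M\"obius-inversion argument on the resulting univariate polynomial --- is precisely what breaks down for probabilistic polynomials over $\mathbb{R}$. Symmetrization replaces the value at Hamming weight $k$ by the \emph{average} of $P$ over all weight-$k$ inputs; since $P$ may take arbitrary real values on its error set, an $\varepsilon$-fraction of bad inputs at a single weight level destroys the symmetrized polynomial entirely, and passing to the $0/1$ indicator of agreement (as you suggest) leaves you with no polynomial structure at all. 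This is the same obstruction that blocks the classical approximate-degree machinery here, and your exact rigidity computation (correct as stated for polynomials agreeing with $\OR$ on \emph{all} inputs of weight $\leq D$) does not survive it: the adversary can concentrate its errors on entire weight levels under any fixed permutation-invariant distribution of the kind you propose.

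The missing idea, which the paper supplies, is anti-concentration: by the lemma of Meka, Nguyen and Vu, a degree-$d$ multilinear polynomial with $r$ disjoint degree-$d$ monomials takes any fixed value (in particular the value $1$) with probability $o(1)$ under the uniform distribution once $r$ is moderately large. Since a good approximator of $\OR$ must equal $1$ with probability $\geq 1/3$ under $\mc{U}_X$, it can have at most $r$ disjoint leading monomials; a maximal disjoint family then touches only $dr$ variables, and a random restriction with $*$-probability about $1/r^2$ sets all of them to $0$ and strictly reduces the degree, while \cref{obs:restr-err} shows the restricted polynomial remains good on a slightly shifted window of the $\mu_i$'s. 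Iterating at most $d_0$ times yields a constant polynomial that must be $1$ (by the error bound under uniform) yet must be $0$ (since restrictions only set variables to $0$ and $q_0(\mathbf{0})=0$). Your iterative restriction scheme gestures at this structure, but without the anti-concentration step there is no mechanism forcing the degree to drop, and without the value-at-$\mathbf{0}$ endpoint there is no contradiction; as written, the proposal defers the entire difficulty to an unproven lemma whose natural proof strategy is known to fail.
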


\begin{remark}
\label{rem:1/8}
Though the theorem is stated for error $1/8$, it is not hard to see that it holds (with constant factor losses) as long as the error is bounded by $1/2 - \Omega(1)$. One way to see this is to appeal to \cref{thm:gen-err}. Another way is to do a simpler error reduction specific to the $\OR$ function as we do in the proof of \cref{thm:OR-lbd}.
\end{remark}

In order to prove \cref{thm:OR-lbd}, we use an anti-concentration lemma due to Meka, Nguyen and Vu~\cite{MekaNV2016}\footnote{The result of Meka et al. is actually stated for polynomials over the \emph{Fourier basis} of Parity functions (see, e.g., the book of O'Donnell~\cite{ODonnell}). However, it is an easy observation that a polynomial of degree $d$ has $r$ disjoint terms of degree $d$ in the standard monomial basis if and only if it has $r$ disjoint terms of degree $d$ in the Fourier basis. Hence, the result holds in the standard basis as well.} coupled with a random restriction argument inspired by the work of Razborov and Viola~\cite{RazborovV2013}.

\begin{lemma}[Meka, Nguyen, and Vu~{\cite[Theorem~1.6]{MekaNV2016}}]
\label{lem:NV}
There exists an absolute constant $B > 0$ so that the following holds. Let $p(x)\in \mathbb{R}[X]$ be a degree $d$ multilinear polynomial with at least $r$ disjoint degree $d$ terms. Then $\prob{x\sim \mc{U}_X}{p(x) = 0} \leq B d^{4/3} r^{-\frac{1}{4d+1}}\sqrt{\log r}$.
\end{lemma}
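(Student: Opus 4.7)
I would prove \cref{lem:NV} by induction on the degree $d$, with the classical Erdős--Littlewood--Offord (ELO) inequality handling the base case and a random restriction driving the inductive step. \emph{Base case $(d=1)$:} a degree-$1$ multilinear polynomial with $r$ disjoint degree-$1$ monomials is a linear form $a_0 + \sum_j a_j x_j$ whose coefficient vector has Hamming weight $\geq r$, and ELO yields $\prob{x \sim \mc{U}_X}{p(x)=0} = O(r^{-1/2})$, which is much stronger than the required $B \cdot r^{-1/5}\sqrt{\log r}$.

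\emph{Inductive step.} Let $m_i = c_i \prod_{j \in S_i} x_j$ for $i \in [r]$ be the disjoint top-degree monomials. Choose a pivot variable $z_i \in S_i$ for each $i$ and set $T_i = S_i \setminus \{z_i\}$, $T = \bigcup_i T_i$. Sample a uniform random restriction $\rho$ on $T$ and call index $i$ \emph{active} if $\rho$ sets every variable of $T_i$ to $1$. These events are mutually independent with probability $2^{-(d-1)}$ each, so a Chernoff bound shows that the number $R$ of active indices is at least $r/2^d$ with probability $\geq 1/2$. For such a good $\rho$, each active $m_i$ restricts to $c_i z_i$, so $p|_\rho$ has $R$ pivot variables each of which appears linearly with a contribution of $\pm c_i$ from $m_i$ alone.

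\emph{Main obstacle.} The real difficulty is that $p|_\rho$ need not be linear: degree-$d$ monomials of $p$ that span several $S_i$'s or mix variables of $T$ with variables outside $\bigcup_i S_i$ can survive $\rho$ as monomials of degree $\geq 2$ in the remaining variables, and they may also cancel some of the $c_i z_i$ contributions from active pivots. Writing $p|_\rho = L + H$ for the degree $\leq 1$ and degree $\geq 2$ parts, I plan to execute a dichotomy. Either $H$ retains at least $R^c$ pairwise disjoint top-degree monomials for some fixed $c \in (0,1)$, in which case the identity $\Pr[p|_\rho = 0] = \Pr[H = -L]$ combined with averaging over $L$ and invoking the induction hypothesis on $H$ at degree $\leq d-1$ yields the bound; or the top-degree monomials of $H$ are covered by a small set of variables, and a further independent random fixing of that set kills $H$ outright while preserving $\Omega(R)$ active pivots whose linear coefficients in $L$ remain nonzero, reducing to ELO on the residual linear form. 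Since $R \geq r/2^d$, a comparison of exponents gives $R^{-1/(4(d-1)+1)} \leq O\!\bigl(r^{-1/(4d+1)}\bigr)$ up to a $2^{O(d)/(4d-3)} = O(1)$ factor, which closes the induction; the $\sqrt{\log r}$ absorbs the union bound over $\rho$, and $d^{4/3}$ absorbs the polynomial-in-$d$ overheads accumulated through the dichotomy. The hard part is executing the dichotomy quantitatively, i.e.\ proving that either case (i) preserves enough of $H$'s top-degree disjoint structure to re-invoke induction cleanly, or case (ii)'s variable cover is small enough that a further random fixing is cheap in both probability and the number of active pivots it leaves untouched in $L$.
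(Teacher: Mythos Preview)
The paper does not prove \cref{lem:NV}; it is quoted verbatim as a result of Meka, Nguyen, and Vu~\cite{MekaNV2015} and used as a black box in the proof of \cref{thm:OR-lbd}. So there is no in-paper argument to compare your plan against.

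That said, your plan has two genuine gaps.

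\emph{The induction is not well-founded.} Your inductive step restricts only the variables in $T=\bigcup_i T_i$ and then asserts that the high-degree part $H$ of $p|_\rho$ has degree at most $d-1$, so that the induction hypothesis applies. But nothing forces this: a degree-$d$ monomial of $p$ whose support avoids $T$ entirely (for instance, one built out of pivots $z_i$ together with variables lying outside $\bigcup_i S_i$) survives the restriction at full degree $d$. Your case~(i) then ``invokes the induction hypothesis on $H$ at degree $\leq d-1$,'' which is unjustified. You might try to push all surviving degree-$d$ monomials into case~(ii) by enlarging the cover, but then the cover need not be small and the dichotomy collapses; in particular you have given no reason why the surviving degree-$d$ part of $H$ should have either many disjoint monomials or a small hitting set.

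\emph{You cannot ``average over $L$'' independently of $H$.} In $p|_\rho = L + H$, both $L$ and $H$ are functions of the \emph{same} set of unrestricted variables (the pivots $z_i$ and everything outside $\bigcup_i S_i$). The identity $\Pr[p|_\rho=0]=\Pr[H=-L]$ is correct, but you cannot condition on $L$ and then apply an anti-concentration bound to $H$ as if $-L$ were a fixed target: once you fix enough variables to determine $L$, you have also altered $H$. The decoupling you need is a genuine separation of variable sets, not a degree-based split of a single polynomial. The actual argument in~\cite{MekaNV2015} performs a different decoupling (conditioning on all variables outside a carefully chosen block and iterating), and the exponents $d^{4/3}$ and $1/(4d+1)$ arise from that specific iteration, not from the $L+H$ scheme you sketch.
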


Note that the above lemma is a non-trivial statement only when $r = d^{\Omega(d)}.$ 

Given a polynomial $q\in \mathbb{R}[X]$, we denote by $\Err_i^X(q)$ the error of polynomial $q$ w.r.t. distribution $\mu_i^X$. Formally,
\[
\Err_i^X(q) = \prob{x \sim \mu_i^X}{q(x) \neq \OR_X(x)}
\]

For a set of variables $X$, $\ell\in \mathbb{N}$ and $\delta\in \mathbb{R}^{\geq 0}$, call a polynomial $q\in \mathbb{R}[X]$ $(X,\ell,\delta)$-good if 
\[
\avg{i\in [\ell]}{\Err_i^X(q)} \leq \delta.
\]

\begin{definition}
\label{def:restriction}
A \emph{random zero-fixing restriction} on the variable set $X$ with $*$-probability $p\in [0,1]$ will be a function $\rho:X\rightarrow \{*,0\}$ with each variable set independently to $*$ with probability $p$ and to $0$ otherwise. We use $X_\rho$ to denote $\rho^{-1}(*)$. The restriction of a polynomial $q$ under $\rho$ is denoted $q|_\rho$.
\end{definition}

\begin{observation}
\label{obs:restr-err}
Let $q\in \mathbb{R}[X]$ and $\rho$ be a zero-fixing random restriction on the variable set $X$ with $*$-probability $p = \frac{1}{2^b}$ where $b\in \mathbb{N}$. For any $i\geq 1$, 
\[
\avg{\rho}{\Err_i^{X_\rho}(q|_\rho)} = \Err^X_{i + b}(q)
\]
\end{observation}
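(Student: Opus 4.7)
The plan is to prove the observation by exhibiting a coupling between the two-stage process (first draw $\rho$, then draw $y\sim \mu_i^{X_\rho}$) and the single-stage process $x\sim \mu_{i+b}^X$, so that the indicator events match up pointwise and the identity follows from the law of total probability (i.e.\ from Fubini/linearity of expectation).

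Concretely, I would first interpret the pair $(\rho,y)$ as defining a full assignment $x\in\{0,1\}^X$ by the rule $x_v = y_v$ if $\rho(v) = *$ and $x_v = 0$ otherwise. For each variable $v\in X$ independently, the probability that $x_v = 1$ under this joint process equals $\Pr[\rho(v) = *]\cdot\Pr[y_v = 1] = 2^{-b}\cdot 2^{-i} = 2^{-(i+b)}$, and independence across coordinates follows from independence of the coordinates of $\rho$ and of $y$. Hence the induced distribution on $x$ is exactly $\mu_{i+b}^X$.

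Next I would observe that, under this coupling, the event we are measuring on the right-hand side matches the event on the left. By the definition of the restriction $q|_\rho$, we have $q|_\rho(y) = q(x)$, since setting the $*$-variables of $q|_\rho$ according to $y$ is the same as first setting the non-$*$ variables of $q$ to $0$ and then setting the $*$-variables according to $y$. Similarly, $\OR_{X_\rho}(y) = \OR_X(x)$, because the variables fixed to $0$ by $\rho$ contribute nothing to the OR. Hence the indicator of $\{q|_\rho(y)\neq \OR_{X_\rho}(y)\}$ equals the indicator of $\{q(x) \neq \OR_X(x)\}$ on every outcome of the joint process.

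Putting these two ingredients together,
\[
\avg{\rho}{\Err_i^{X_\rho}(q|_\rho)} \;=\; \avg{\rho}{\prob{y\sim \mu_i^{X_\rho}}{q|_\rho(y)\neq \OR_{X_\rho}(y)}} \;=\; \prob{x\sim \mu_{i+b}^X}{q(x)\neq \OR_X(x)} \;=\; \Err^X_{i+b}(q),
\]
which is the claim. No step looks like an obstacle: the content is purely bookkeeping about product distributions and the compatibility of restriction with the OR function, and the whole proof should fit in a few lines once the coupling is spelled out.
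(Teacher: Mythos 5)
Your proof is correct and is exactly the argument the paper has in mind: the paper justifies the observation only by the parenthetical remark that sampling each bit to be $1$ with probability $2^{-(i+b)}$ is the same as composing a $*$-probability-$2^{-b}$ restriction with $\mu_i$ on the survivors, which is precisely the coupling you spell out. Your additional checks that $q|_\rho(y)=q(x)$ and $\OR_{X_\rho}(y)=\OR_X(x)$ under the coupling are the right bookkeeping and nothing more is needed.
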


(I.e., setting bits independently to $1$ with probability $\frac{1}{2^{i+b}}$ is the same as first applying a random zero-fixing restriction with $*$-probability $\frac{1}{2^{b}}$ and then setting each surviving variable to $1$ with probability $\frac{1}{2^{i}}$.)

\subsection{Proof of \cref{thm:OR-lbd}}

We argue by contradiction. Let $\mathbf{P}$ be a $1/8$-error probabilistic polynomial for $\OR_{X_0}$ of degree $D < \sqrt{\log n}/A(\log \log n)^{3/2}$ for some absolute constant $A>0$ that we will fix in \cref{clm:calc}. In particular, we have
\[
\prob{\mathbf{P}}{\mathbf{P}(0,0,\ldots,0) \neq 0} \leq \frac{1}{8}
\]

We discard all polynomials $q$ such that $q(0,0,\ldots,0) \neq 0$ from the distribution underlying $\mathbf{P}$ (i.e. we simply condition the distribution on not sampling such a polynomial). The resulting probabilistic polynomial $\mathbf{P}'$ is supported only on polynomials $q\in \mathbb{R}[X_0]$ such that $q(0,0,\ldots,0) = 0$ and further, it can be seen that $\mathbf{P}'$ is a $(1/4)$-error probabilistic polynomial for $\OR_{X_0}$ of degree $D$. 

Let $\mathbf{P}'_1,\ldots,\mathbf{P}'_s$ be $s = \log \log n$ independent instances of $\mathbf{P}'$ and let $\mathbf{Q} = 1-\prod_{i\in [s]}(1-\mathbf{P}'_i)$. Then, $\mathbf{Q}$ is an error $\frac{1}{4^s} = \frac{1}{\log^2 n}$ probabilistic polynomial for $\OR_n$ of degree at most $sD  < \sqrt{\log n}/A\sqrt{\log \log n}$. In particular, there is a polynomial $q_0\in \mathbb{R}[x_1,\ldots,x_n]$ of degree $d_0 < \sqrt{\log n}/A\sqrt{\log \log n}$ such that $q_0(0,0,\ldots,0) = 0$ and for $\varepsilon_0 = \frac{1}{\log^2 n}$ we have 
\[
\avg{i\in [(\log n)/2]}{\Err_i^{X_0}(q_0)} \leq \varepsilon_0
\]
Define $n_0 = |X_0| = n$ and $ \ell_0 = (\log n)/2$. By the above inequality, the polynomial $q_0$ is $(X_0, \ell_0,  \varepsilon_0)$-good. Also define parameters $r = (d_0\cdot \log^2 n)^{10d_0}$ and $p = \frac{1}{2^b}$ where $b\in \mathbb{N}$ is chosen so that $p \in [\frac{1}{2r^2},\frac{1}{r^2}]$. Note that 
\[r \leq (\log n)^{O(d_0)} \leq (\log n)^{O(\sqrt{\log n})} = n^{o(1)}\]
 and hence $p =\Theta(1/r^2) = 1/n^{o(1)}$.

We now define a sequence of polynomials $q_1,q_2,\ldots,q_t$ such that:
\begin{itemize}
\item Each $q_{i}\in \mathbb{R}[X_i]$ where $X_i\subseteq X_0$ and has degree $d_{i} \geq 0$. Also, $|X_i| = n_i$ where $n_i \in [pn_{i-1}/2, 3pn_{i-1}/2]$. Further $\deg(q_i) = d_i < d_{i-1}$. The polynomial $q_i = q_{i-1}|_{\rho_i}$ for some restriction $\rho_i:X_{i-1}\rightarrow\{*,0\}$.
\item Each polynomial $q_{i}$ is $(X_i, \ell_i, \varepsilon_i)$-good where $\ell_i = \ell_{i-1} - b$ and $\varepsilon_i = \varepsilon_{i-1}\cdot \exp\left(\frac{16b}{\log n}\right)$.
\item $d_t = \deg(q_t) = 0$. That is, $q_t$ is a constant polynomial.
\end{itemize}

Before we describe how to construct this sequence, let us see how it implies the desired contradiction. Note that since $d_i < d_{i-1}$ for each $i\geq 1$, the length $t$ of the sequence is bounded by $d_0 < \sqrt{\log n}/A\sqrt{\log \log n}$. 

We first make the following simple claim.

\begin{claim}
\label{clm:calc}
There is a large enough constant $A$ in the definition of $D$ above so that for each $i\in [t]$, $n_i\geq \sqrt{n}$, $\ell_i \geq \frac{\log n}{4}$, and $\varepsilon_i < \frac{1}{\log n}$.
\end{claim}
\begin{proof} 
It can be checked that the following inequalities hold for a large enough choice of the constant $A$. 

Firstly,
\[
n_i \geq n_t \geq n_0 \cdot (p/2)^t = n\cdot (d_0\log^2 n)^{-O(d_0^2)}\geq \sqrt{n}.
\]

Also, note that $\ell_i = \ell_0 - bi \geq \ell_0-bt = (\log n)/2 - O(d_0^2 \log \log n) \geq \frac{\log n}{4}$ and 
\[
\varepsilon_i  = \varepsilon_0\cdot \exp\left(\frac{16bi}{\log n}\right) \leq \varepsilon_0\cdot \exp\left(\frac{16bt}{\log n}\right) = \frac{1}{\log^2 n}\cdot \exp\left(\frac{O(d_0^2\log\log n)}{\log n}\right) < \frac{1}{\log n}.
\]

\end{proof}

In particular, since $q_t$ is $(X_t,\ell_t,\varepsilon_t)$-good, we must have 
\begin{equation}
\label{eq:qt-err}
\Err^{X_t}_1(q_t) \leq \ell_t \avg{i\in [\ell_t]}{\Err^{X_t}_i(q_t)} < \varepsilon_t \ell_t < \frac{1}{2}
\end{equation}
using the fact that $\ell_t \leq \ell_0 = (\log n)/2$ and $\varepsilon_t < \frac{1}{\log n}$.

Since $n_t\geq\sqrt{n}$, the function $\OR_{X_t}(x)$ evaluates to $1$ under the distribution $\mu_{1}^{X_t} = \mc{U}_{X_t}$ with probability $1-o(1)$. Thus, $q_t$ must also evaluate to $1$ on some input. However, since $q_t$ is a constant polynomial, this implies that $q_t = 1$. But this implies that $q_t(0,0,\ldots,0) = 1$ as well, which leads to a contradiction, since $q_t$ is obtained by setting some input bits of $q_0$ to $0$ and $q_0(0,0,\ldots,0) = 0$ by our choice of $q_0$. This completes the proof of the theorem.

Now we describe how to obtain the sequence $q_1,\ldots,q_t$. More precisely, we describe how to obtain $q_i$ from $q_{i-1}$ assuming $d_{i-1}\geq 1$. Fix any $i\geq 1$ such that $d_{i-1}\geq 1$. We assume that the sequence $q_1,\ldots,q_{i-1}$ of polynomials constructed so far satisfy the above properties. 

For brevity, let $q,X,m,d,\ell,\varepsilon$ denote $q_{i-1},X_{i-1},n_{i-1},d_{i-1}, \ell_{i-1}, \varepsilon_{i-1}$ respectively.

We know that $q$ is $(X,\ell,\varepsilon)$-good. As we did in \eqref{eq:qt-err} for $q_t$, we can use this to show that $\Err^{X}_1(q) < \frac{1}{2}$ and since $\OR_{X}(x)$ takes the value $1$ on an input $x\sim \mc{U}_{X}$ with probability $1-o(1)$, we see that
\begin{equation}
\label{eq:q-conc}
\prob{x\sim \mc{U}_{X}}{q(x) = 1} \geq \frac{1}{2}-o(1)\geq \frac{1}{3}.
\end{equation}

\cref{lem:NV} then implies that there cannot be $r$ disjoint monomials of degree $d$ in $q$. To see this, assume that there are indeed $r$ many disjoint monomials of degree $d$ in $q$. Then by \cref{lem:NV}, the probability that $q(x)-1 = 0$ for a random $x\sim \mc{U}_X$ is at most 
\begin{align*}
Bd^{4/3}r^{-\frac{1}{4d+1}}\sqrt{\log r} &\leq Bd_0^{4/3}r^{-\frac{1}{5d_0}}\sqrt{\log r}\\
&\leq Bd_0^{4/3}\cdot \frac{\sqrt{10d_0\log(d_0\log^2 n)}}{d_0^{2}\log^4 n} = o(1).
\end{align*}

This contradicts \eqref{eq:q-conc}.

Hence, we know that $q$ cannot be contain more than $r$ many disjoint monomials of degree $d$. Let $S$ be any maximal set of disjoint monomials appearing in $q$. Note that by definition, every monomial of degree $d$ contains at least one variable from $S$ and hence setting all the variables in $S$ reduces the degree of the polynomial. The number of variables appearing in $S$ is at most $d|S|\leq dr$. 

We now choose a random zero-fixing restriction $\rho$ with $*$-probability $p$ as defined above and consider the polynomial $q|_\rho$. Define the following ``bad'' events:

\begin{itemize}
\item $\mc{E}_1(\rho)$ is the event that $|X_\rho| \not\in [pm/2,3pm/2]$.
\item $\mc{E}_2(\rho)$ is the event that some variable in $S$ is not set to $0$.
\item $\mc{E}_3(\rho)$ is the event that $q|_\rho$ is not $(X_\rho,\ell',\varepsilon')$-good where $\ell' = \ell-b$ and $\varepsilon' = \varepsilon\cdot \exp\left(\frac{16b}{\log n}\right)$.
\end{itemize}

We claim that there is a $\rho$ so that none of the bad events $\mc{E}_1(\rho),\mc{E}_2(\rho)$ or $\mc{E}_3(\rho)$ occur. This will imply that we can take $q_{i} = q|_\rho, X_i = X_\rho, \ell_i = \ell', \varepsilon_i = \varepsilon'$ and we will be done. So we only need to show that $\prob{\rho}{\mc{E}_1(\rho) \vee \mc{E}_2(\rho) \vee \mc{E}_3(\rho)}<1$. This is done as follows.

\begin{itemize}
\item $\prob{\rho}{\mc{E}_1(\rho)}$: By \cref{clm:calc}, we know that $m \geq \sqrt{n}$ and hence $\avg{\rho}{|X_\rho|} = pm = m\cdot \frac{1}{n^{o(1)}} \geq n^{1/4}$. Hence, by a Chernoff bound, the probability that $|X_\rho|\not\in [pm/2,3pm/2]$ is bounded by $\exp\left(-\Omega(n^{1/4})\right)$.

\item $\prob{\rho}{\mc{E}_2(\rho)}$: By a union bound over $S$, this probability is bounded by $p|S|\leq rd_0/r^2 < \frac{1}{\log n}$.

\item $\prob{\rho}{\mc{E}_3(\rho)}$: By \cref{obs:restr-err}, we know that for any $i$,
\[
\avg{\rho}{\Err^{X_\rho}_i(q|_\rho)} = \Err^{X}_{i+b}(q).
\]
Hence, 
\begin{equation}
\label{eq:err-qrho-q}
\avg{\rho}{\avg{i\in [\ell']}{\Err^{X_\rho}_i(q|_\rho)}} = \avg{i\in [\ell']}{\Err^X_{i+b}(q)} = \avg{i\in \{b+1,\ldots, b+\ell'\}}{\Err^X_i(q)} = \avg{i\in \{b+1,\ldots, \ell\}}{\Err^X_i(q)}.
\end{equation}

We can bound the right hand side of the above equation by
\[
\avg{i\in \{b+1,\ldots, \ell\}}{\Err^X_i(q)} \leq \frac{1}{(1-\frac{b}{\ell})}\avg{i\in [\ell]}{\Err^X_i(q)} \leq \frac{\varepsilon}{(1-\frac{b}{\ell})}
\]
where the final inequality follows from the fact that $q$ is $(X,\ell,\varepsilon)$-good. Further, by \cref{clm:calc}, we know that $\ell \geq \frac{\log n}{4} \gg b$, and hence we can bound the above as follows.
\[
\avg{i\in \{b+1,\ldots, \ell\}}{\Err^X_i(q)} \leq \frac{\varepsilon}{(1-\frac{b}{\ell})} \leq \varepsilon\cdot (1+\frac{2b}{\ell})\leq \varepsilon\cdot (1+\frac{8b}{\log n}).
\]
Plugging the above bound into \eqref{eq:err-qrho-q}, we obtain
\[
\avg{\rho}{\avg{i\in [\ell']}{\Err^{X_\rho}_i(q|_\rho)}} \leq \varepsilon\cdot (1+\frac{8b}{\log n})\leq \varepsilon\cdot \exp\left(\frac{8b}{\log n}\right).
\]
By Markov's inequality, 
\[
\prob{\rho}{\avg{i\in [\ell']}{\Err^{X_\rho}_i(q|_\rho)} > \varepsilon \cdot \exp\left(\frac{16b}{\log n}\right)} \leq \exp\left(-\frac{8b}{\log n}\right() = 1 - \Omega(\frac{b}{\log n}) \leq 1-\frac{2}{\log n}.
\]
Thus, $\prob{\rho}{\mc{E}_3(\rho)}\leq 1-\frac{2}{\log n}$.
\end{itemize}

By a union bound, we have
\[
\prob{\rho}{\mc{E}_1(\rho) \vee \mc{E}_2(\rho) \vee \mc{E}_3(\rho)} \leq \exp\left(-\Omega(n^{1/4})\right) + \frac{1}{\log n} + 1- \frac{2}{\log n} < 1.
\]

\section{Open questions}
\label{sec:open}

Both our results leave some scope for improvement.

\paragraph{Independence required to fool $\AC^0$.} A close inspection of our proof (including the details of \cref{lem:Tarui-Brav} and \cref{thm:bravtal}) shows that $(\log s)^{3d+O(1)}\cdot \log(1/\varepsilon)$-wise independence is sufficient to $\varepsilon$-fool $\AC^0(s,d)$. Avishay Tal (personal communication) showed that this can be further improved to $(\log s)^{2.5d+O(1)}\cdot \log(1/\varepsilon)$-wise independence. It is open if this can be strengthened to, say, $(\log s)^{d+O(1)}\cdot \log(1/\varepsilon)$ or even $(\log s)^{d-1}\cdot \log(1/\varepsilon)$, matching the lower bound due to Mansour~\cite{LubyV1996}.

\paragraph{Probabilistic degree of $\OR$.} It remains an open question to prove tight bounds on the degree of any $\varepsilon$-error probabilistic polynomial for the $\OR$ function on $n$ variables for any $n$ and $\varepsilon.$ The ideal result in this direction would be a lower bound of $\Omega(\log n\cdot \log(1/\varepsilon))$, matching the upper bounds from~\cite{BeigelRS1991} and~\cite{Tarui1993} mentioned in the Introduction. A result of Alon, Bar-Noy, Linial, and Peleg~\cite{AlonBLP1991} implies that for polynomials of the specific form used in~\cite{BeigelRS1991,Tarui1993}\footnote{The polynomials from~\cite{BeigelRS1991,Tarui1993} are of the form $P(x_1,\ldots,x_n) = 1-\prod_{S\in \mc{F}}(1-\sum_{i\in S} x_i)$ for some family $\mc{F}$ of subsets of $[n].$ The lower bound of Alon et al.~\cite{AlonBLP1991} holds for polynomials of this form.} and $\varepsilon = 1/n,$ the degree bound of $O(\log^2 n)$ is tight.

\section{Acknowledgements} 

We thank Swagato Sanyal and Madhu Sudan for encouragement and useful discussions  which greatly simplified our proofs. We thank Avishay Tal for his generous feedback and comments and also for showing us the improvement in seedlength mentioned in \cref{sec:open}. We  thank Paul Beame and Xin Yang for pointing out that a change in parameters results in a quantitative improvement in the lower bound obtained in~\cref{thm:OR-lbd}. We thank Noga Alon for pointing out the implications of~\cite{AlonBLP1991} to our setting. Finally, we thank the anonymous reviewers of RANDOM 2016 and Random Structures \& Algorithms for their careful perusal of our paper.

{\small
\bibliographystyle{plainurl}
\bibliography{ac0-poly-bib}

\begin{thebibliography}{10}

\bibitem{AjtaiB1984}
Mikl{\'{o}}s Ajtai and Michael Ben{-}Or.
\newblock A theorem on probabilistic constant depth computations.
\newblock In {\em Proc.\ $16$th ACM Symp.\ on Theory of Computing (STOC)},
  pages 471--474, 1984.
\newblock \href {http://dx.doi.org/10.1145/800057.808715}
  {\path{doi:10.1145/800057.808715}}.

\bibitem{AlonBLP1991}
Noga Alon, Amotz Bar{-}Noy, Nathan Linial, and David Peleg.
\newblock A lower bound for radio broadcast.
\newblock {\em J. Comput.\ Syst.\ Sci.}, 43(2):290--298, 1991.
\newblock \href {http://dx.doi.org/10.1016/0022-0000(91)90015-W}
  {\path{doi:10.1016/0022-0000(91)90015-W}}.

\bibitem{BeigelRS1991}
Richard Beigel, Nick Reingold, and Daniel~A. Spielman.
\newblock The perceptron strikes back.
\newblock In {\em Proc.\ $6$th IEEE Conf.\ on Structure in Complexity Theory},
  pages 286--291, 1991.
\newblock \href {http://dx.doi.org/10.1109/SCT.1991.160270}
  {\path{doi:10.1109/SCT.1991.160270}}.

\bibitem{Braverman2010}
Mark Braverman.
\newblock Polylogarithmic independence fools ${AC}^{0}$ circuits.
\newblock {\em J. ACM}, 57(5), 2010.
\newblock (Preliminary version in {\em 24th IEEE Conference on Computational
  Complexity}, 2009).
\newblock \href {http://dx.doi.org/10.1145/1754399.1754401}
  {\path{doi:10.1145/1754399.1754401}}.

\bibitem{CostelloTV2006}
Kevin~P. Costello, Terence Tao, and Van Vu.
\newblock Random symmetric matrices are almost surely nonsingular.
\newblock {\em Duke Math. J.}, 135(2):395--413, 2006.
\newblock \href {http://arxiv.org/abs/math/0505156}
  {\path{arXiv:math/0505156}}, \href
  {http://dx.doi.org/10.1215/S0012-7094-06-13527-5}
  {\path{doi:10.1215/S0012-7094-06-13527-5}}.

\bibitem{Erdos1945}
Paul Erd\H{o}s.
\newblock On a lemma of {L}ittlewood and {O}fford.
\newblock {\em Bull.\ Amer.\ Math.\ Soc.}, 51(12):898--902, 1945.
\newblock \href {http://dx.doi.org/10.1090/S0002-9904-1945-08454-7}
  {\path{doi:10.1090/S0002-9904-1945-08454-7}}.

\bibitem{HarshaS2016a}
Prahladh Harsha and Srikanth Srinivasan.
\newblock On polynomial approximations to ${AC}^0$.
\newblock In Klaus Jansen, Claire Mathieu, Jos{\'e} D.~P. Rolim, and Chris
  Umans, editors, {\em Proc.\ $20$th International Workshop on Randomization
  and Computation (RANDOM)}, volume~60 of {\em LIPIcs}, pages 32:1--32:14.
  Schloss Dagstuhl, 2016.
\newblock \href {http://arxiv.org/abs/1604.08121} {\path{arXiv:1604.08121}},
  \href {http://dx.doi.org/10.4230/LIPIcs.APPROX-RANDOM.2016.32}
  {\path{doi:10.4230/LIPIcs.APPROX-RANDOM.2016.32}}.

\bibitem{Hastad1989}
Johan H{\aa}stad.
\newblock Almost optimal lower bounds for small depth circuits.
\newblock In Silvio Micali, editor, {\em Randomness and Computation}, volume~5
  of {\em Advances in Computing Research}, pages 143--170. JAI Press,
  Greenwich, Connecticut, 1989.
\newblock (Preliminary version in {\em 18th STOC} 1986).
\newblock URL: \url{http://www.csc.kth.se/~johanh/largesmalldepth.pdf}.

\bibitem{Hastad2014}
Johan H{\aa}stad.
\newblock On the correlation of parity and small-depth circuits.
\newblock {\em SIAM J. Comput.}, 43(5):1699--1708, 2014.
\newblock \href {http://dx.doi.org/10.1137/120897432}
  {\path{doi:10.1137/120897432}}.

\bibitem{ImpagliazzoMP2012}
Russell Impagliazzo, William Matthews, and Ramamohan Paturi.
\newblock A satisfiability algorithm for ${AC}^0$.
\newblock In {\em Proc.\ $23$rd Annual {ACM}-{SIAM} Symp.\ on Discrete
  Algorithms (SODA)}, pages 961--972, 2012.
\newblock \href {http://arxiv.org/abs/1107.3127} {\path{arXiv:1107.3127}}.

\bibitem{KoppartyS2012}
Swastik Kopparty and Srikanth Srinivasan.
\newblock Certifying polynomials for ${AC}^{0}(\mbox{parity})$ circuits, with
  applications.
\newblock In Deepak D'Souza, Telikepalli Kavitha, and Jaikumar Radhakrishnan,
  editors, {\em Proc. $32$nd IARCS Annual Conf.\ on Foundations of Software
  Tech.\ and Theoretical Comp.\ Science (FSTTCS)}, volume~18 of {\em LIPIcs},
  pages 36--47. Schloss Dagstuhl, 2012.
\newblock \href {http://dx.doi.org/10.4230/LIPIcs.FSTTCS.2012.36}
  {\path{doi:10.4230/LIPIcs.FSTTCS.2012.36}}.

\bibitem{LinialMN1993}
Nathan Linial, Yishay Mansour, and Noam Nisan.
\newblock Constant depth circuits, {F}ourier transform, and learnability.
\newblock {\em J. ACM}, 40(3):607--620, 1993.
\newblock (Preliminary version in {\em 30th FOCS}, 1989).
\newblock \href {http://dx.doi.org/10.1145/174130.174138}
  {\path{doi:10.1145/174130.174138}}.

\bibitem{LittlewoodO1938}
John~Edensor Littlewood and A.~Cyril Offord.
\newblock On the number of real roots of a random algebraic equation.
\newblock {\em J. London\ Math. Soc.}, s1-13(4):288--295, 1938.
\newblock \href {http://dx.doi.org/10.1112/jlms/s1-13.4.288}
  {\path{doi:10.1112/jlms/s1-13.4.288}}.

\bibitem{LubyV1996}
Michael Luby and Boban Velickovic.
\newblock On deterministic approximation of {DNF}.
\newblock {\em Algorithmica}, 16(4/5):415--433, 1996.
\newblock (Preliminary version in {\em 23rd STOC}, 1991).
\newblock \href {http://dx.doi.org/10.1007/BF01940873}
  {\path{doi:10.1007/BF01940873}}.

\bibitem{MekaNV2016}
Raghu Meka, Oanh Nguyen, and Van Vu.
\newblock Anti-concentration for polynomials of independent random variables.
\newblock {\em Theory Comput.}, 12(11):1--17, 2016.
\newblock \href {http://arxiv.org/abs/1507.00829} {\path{arXiv:1507.00829}},
  \href {http://dx.doi.org/10.4086/toc.2016.v012a011}
  {\path{doi:10.4086/toc.2016.v012a011}}.

\bibitem{NisanW1994}
Noam Nisan and Avi Wigderson.
\newblock Hardness vs. randomness.
\newblock {\em J. Comput.\ Syst.\ Sci.}, 49(2):149--167, October 1994.
\newblock (Preliminary version in {\em 29th FOCS}, 1988).
\newblock \href {http://dx.doi.org/10.1016/S0022-0000(05)80043-1}
  {\path{doi:10.1016/S0022-0000(05)80043-1}}.

\bibitem{ODonnell}
Ryan O'Donnell.
\newblock {\em Analysis of {B}oolean {F}unctions}.
\newblock Cambridge University Press, 2014.
\newblock URL: \url{http://analysisofbooleanfunctions.org/}, \href
  {http://dx.doi.org/10.1017/CBO9781139814782}
  {\path{doi:10.1017/CBO9781139814782}}.

\bibitem{OliveiraS2015}
Igor~Carboni Oliveira and Rahul Santhanam.
\newblock Majority is incompressible by ${AC}^0[p]$ circuits.
\newblock In {\em Proc.\ $30$th Computational Complexity Conf.}, volume~33 of
  {\em LIPIcs}, pages 124--157. Schloss Dagstuhl, 2015.
\newblock \href {http://dx.doi.org/10.4230/LIPIcs.CCC.2015.124}
  {\path{doi:10.4230/LIPIcs.CCC.2015.124}}.

\bibitem{Razborov1987}
Alexander~A. Razborov.
\newblock \foreignlanguage{russian}{Нжние оценки размера
  схем ограниченной глубины в полном
  базисе, содержащем функцию логического
  сложения} ({R}ussian) [{L}ower bounds on the size of bounded depth
  circuits over a complete basis with logical addition].
\newblock {\em Mathematicheskie Zametki}, 41(4):598--607, 1987.
\newblock (English translation in {\em Mathematical Notes of the Academy of
  Sciences of the USSR}, 41(4):333--338, 1987).
\newblock URL: \url{http://mi.mathnet.ru/eng/mz4883}, \href
  {http://dx.doi.org/10.1007/BF01137685} {\path{doi:10.1007/BF01137685}}.

\bibitem{RazborovV2013}
Alexander~A. Razborov and Emanuele Viola.
\newblock Real advantage.
\newblock {\em {ACM} T.\ Comput.\ Theory}, 5(4):17, 2013.
\newblock \href {http://dx.doi.org/10.1145/2540089}
  {\path{doi:10.1145/2540089}}.

\bibitem{Smolensky1987}
Roman Smolensky.
\newblock Algebraic methods in the theory of lower bounds for boolean circuit
  complexity.
\newblock In {\em Proc.\ $19$th ACM Symp.\ on Theory of Computing (STOC)},
  pages 77--82, 1987.
\newblock \href {http://dx.doi.org/10.1145/28395.28404}
  {\path{doi:10.1145/28395.28404}}.

\bibitem{Tal2017}
Avishay Tal.
\newblock Tight bounds on the {F}ourier {S}pectrum of ${AC}^0$.
\newblock In {\em Proc.\ $32$nd Comput.\ Complexity Conf.}, volume~79 of {\em
  LIPIcs}, pages 15:1--15:31. Schloss Dagstuhl, 2017.
\newblock \href {http://dx.doi.org/10.4230/LIPIcs.CCC.2017.15}
  {\path{doi:10.4230/LIPIcs.CCC.2017.15}}.

\bibitem{Tarui1993}
Jun Tarui.
\newblock Probablistic polynomials, ${AC}^{0}$ functions, and the
  polynomial-time hierarchy.
\newblock {\em Theoret.\ Comput.\ Sci.}, 113(1):167--183, 1993.
\newblock (Preliminary Version in {\em 8th STACS}, 1991).
\newblock \href {http://dx.doi.org/10.1016/0304-3975(93)90214-E}
  {\path{doi:10.1016/0304-3975(93)90214-E}}.

\bibitem{TodaO1992}
Seinosuke Toda and Mitsunori Ogiwara.
\newblock Counting classes are at least as hard as the polynomial-time
  hierarchy.
\newblock {\em SIAM J. Comput.}, 21(2):316--328, 1992.
\newblock (Preliminary version in {\em 6th Structure in Complexity Theory
  Conference}, 1991).
\newblock \href {http://dx.doi.org/10.1137/0221023}
  {\path{doi:10.1137/0221023}}.

\bibitem{TrevisanX2013}
Luca Trevisan and Tongke Xue.
\newblock A derandomized switching lemma and an improved derandomization of
  {$AC^0$}.
\newblock In {\em Proc.\ $28$th IEEE Conf.\ on Computational Complexity}, pages
  242--247, 2013.
\newblock \href {http://dx.doi.org/10.1109/CCC.2013.32}
  {\path{doi:10.1109/CCC.2013.32}}.

\bibitem{Williams2014}
Ryan Williams.
\newblock New algorithms and lower bounds for circuits with linear threshold
  gates.
\newblock In {\em Proc.\ $46$th ACM Symp.\ on Theory of Computing (STOC)},
  pages 194--202, 2014.
\newblock \href {http://arxiv.org/abs/1401.2444} {\path{arXiv:1401.2444}},
  \href {http://dx.doi.org/10.1145/2591796.2591858}
  {\path{doi:10.1145/2591796.2591858}}.

\end{thebibliography}
}

\appendix

\section{The limitations of the Nisan-Wigderson paradigm}
\label{app:NW}

In this section, we show that the general hardness-to-randomness tradeoff of Nisan and Wigderson~\cite{NisanW1994} does not yield a PRG with optimal seedlength as a function of $\varepsilon$ \emph{given our current knowledge} of circuit lower bounds.

We start by describing the meta-result of Nisan and Wigderson~\cite{NisanW1994} that allows us to convert any sufficiently hard-to-compute function for a class of circuits to a PRG for a slightly weaker class of circuits. The result is true in greater generality than we describe here but to keep things concrete, we stick to the setting of $\AC^0(s,d)$.

We say that a function $f:\{0,1\}^r\rightarrow \{0,1\}$ is $(s,d,\varepsilon)$-hard if given any circuit $C$ from $\AC^0(s,d)$ of size $s$, we have
\[
\prob{x\in \{0,1\}^r}{C(x) = f(x)} \leq \frac{1}{2}+\varepsilon.
\]

For non-negative integers $m,r,\ell,s$, we say that a family $\mc{F}\subseteq \binom{[m]}{r}$, we say that $\mc{F}$ is an $(m,r,\ell,s)$ design if $|\mc{F}| = s$ and for any distinct $S,T\in \mc{F}$, we have $|S\cap T|\leq \ell$.

Nisan and Wigderson~\cite{NisanW1994} show the following.

\begin{theorem}[{\cite{NisanW1994}}]
\label{thm:NW}
Let $m,r,\ell,s\in \mathbb{N}$ be positive parameters such that $m\geq r\geq \ell$. Given an explicit $f:\{0,1\}^r\rightarrow\{0,1\}$ that is $(s\cdot 2^\ell, d+1,\varepsilon/s)$-hard and an explicit $(m,r,\ell,s)$-design, we can construct an explicit PRG $G:\{0,1\}^m\rightarrow \{0,1\}^s$ that fools circuits from $\AC^0(s,d)$ with error at most $\varepsilon$.
\end{theorem}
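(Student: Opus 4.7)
The plan is to use the standard Nisan--Wigderson construction: given the design $\mc{F} = \{S_1, \ldots, S_s\}$ and the hard function $f$, define $G:\{0,1\}^m \rightarrow \{0,1\}^s$ by $G(z)_i = f(z|_{S_i})$, where $z|_{S_i} \in \{0,1\}^r$ denotes the bits of $z$ at positions in $S_i$. I would prove that no $\AC^0(s,d)$ circuit distinguishes $G(\mc{U}_m)$ from $\mc{U}_s$ with advantage exceeding $\varepsilon$, via a contrapositive reduction that turns any such distinguisher into a small, shallow circuit computing $f$ too well, contradicting the hardness hypothesis.

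The first step is the hybrid-to-predictor conversion. Supposing $C \in \AC^0(s,d)$ distinguishes with advantage $> \varepsilon$, a standard hybrid argument over the $s$ output coordinates yields an index $i^*$ where $C$ separates two consecutive hybrids by at least $\varepsilon/s$; Yao's next-bit predictor trick then produces a circuit $P$ of essentially the same size and depth as $C$ which, given the other $s-1$ generator outputs $\{f(z|_{S_j})\}_{j\neq i^*}$ together with uniform random bits in place of $f(z|_{S_{i^*}})$, predicts $f(z|_{S_{i^*}})$ correctly with advantage at least $\varepsilon/s$ over $z$ and the auxiliary randomness.

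Next, I would fix the coordinates of $z$ outside $S_{i^*}$ by an averaging argument so as to preserve the prediction advantage. After this fixing, the remaining input $y = z|_{S_{i^*}} \in \{0,1\}^r$ is uniform, the target is $f(y)$, and each of the other generator outputs $f(z|_{S_j})$ becomes a function of only the $|S_{i^*} \cap S_j| \leq \ell$ surviving bits of $y$. Each such $\ell$-junta admits a depth-$2$ DNF (equivalently, CNF) of size at most $2^\ell$; splicing these $s-1$ sub-circuits into the inputs of $P$ and collapsing each now-adjacent pair of like gates at the bottom layer produces a circuit of size at most $s \cdot 2^\ell$ (up to lower-order terms) and depth at most $d+1$ that computes $f(y)$ with correlation at least $\varepsilon/s$. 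This contradicts the assumed $(s \cdot 2^\ell, d+1, \varepsilon/s)$-hardness of $f$, completing the proof.

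The main obstacle is the careful accounting of depth and size in the substitution step: one has to check that replacing each of the $s-1$ inputs of $P$ by a DNF and then merging adjacent OR-layers genuinely costs only one extra depth level and a multiplicative $2^\ell$ factor in size, rather than more, and that the polarity of the bottom layer can be arranged (by using either DNFs or CNFs) to make this merge legitimate. A secondary subtlety is threading the auxiliary randomness that replaces the $i^*$-th output during the predictor step, so that fixing $z|_{S_{i^*}^c}$ together with this fresh randomness correctly simulates the Yao hybrid distribution and the $\varepsilon/s$ advantage survives the averaging.
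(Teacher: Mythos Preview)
Your proposal is correct and is exactly the standard Nisan--Wigderson argument; the paper itself does not supply a proof of this theorem but simply cites it from~\cite{NisanW1994}, so there is no in-paper proof to compare against. Your sketch of the hybrid-to-predictor step, the averaging to fix the seed outside $S_{i^*}$, and the replacement of each $\ell$-junta by a size-$2^\ell$ DNF/CNF with one layer of depth merged away is precisely the textbook reduction, and your identified subtleties (depth/size accounting and handling of the predictor's auxiliary randomness) are the right ones to watch.
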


To use this theorem, we need a hard function for circuits in $\AC^0$. The best such result known currently is the following due to Impagliazzo, Matthews, and Paturi~\cite{ImpagliazzoMP2012} (see also H\r{a}stad~\cite{Hastad2014}).

\begin{theorem}
\label{thm:IMP}
Let $d\geq 1$ be a constant. The Parity function on $r$ is bits is $(s_1,d_1,\delta)$-hard if $r \geq A(\log s_1)^{d_1-1}\cdot \log(1/\delta)$ for some constant $A > 0$ depending on $d$.
\end{theorem}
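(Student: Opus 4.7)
The plan is to follow the switching-lemma approach, using the improved multi-switching lemma of Impagliazzo--Matthews--Paturi~\cite{ImpagliazzoMP2012} (or equivalently the form in H\r{a}stad~\cite{Hastad2014}) in order to get the optimal $(\log s_1)^{d_1-1}$ (rather than $(\log s_1)^{d_1}$) dependence. The high-level idea is to apply a sequence of random restrictions that collapse any $\AC^0(s_1,d_1)$ circuit to a shallow decision tree on the surviving variables, while keeping enough variables alive that Parity on the survivors cannot be approximated by such a tree.

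Concretely, I would choose a $*$-probability $p = 1/(A' \log s_1)$ for a suitably large constant $A'>0$ and apply $d_1-1$ independent random restrictions $\rho_1,\ldots,\rho_{d_1-1}$ to the $r$ input bits. The multi-switching lemma, applied level-by-level, states that with total failure probability at most $\delta/2$, these restrictions transform the circuit into a decision tree of depth $t = O(\log(1/\delta))$ on the surviving variables. Each restriction step reduces the depth of the circuit by one while losing only a constant factor per level, which is the essential content of the improved switching lemmas of~\cite{ImpagliazzoMP2012,Hastad2014}.

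Next I would bound the number of surviving variables: the expected count is $r p^{d_1-1}$, and a Chernoff bound shows that at least $m = \Omega(r p^{d_1-1}) = \Omega\!\left(r/(A'\log s_1)^{d_1-1}\right)$ variables survive, except with failure probability negligible in $\delta$. The hypothesis $r \geq A(\log s_1)^{d_1-1}\log(1/\delta)$, for $A$ chosen large enough in terms of $A'$, then gives $m \geq 2t$. Since random restrictions only fix bits to constants, Parity on the original $r$-bit input restricted to the surviving coordinates is still Parity (XORed with the known parity of the fixed bits), so any approximator of Parity on the full input induces, after restriction, an approximator of the same quality for Parity on $m$ surviving bits.

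Finally, I would invoke the elementary fact that a depth-$t$ decision tree on $m > t$ variables agrees with Parity on exactly half of all inputs: every leaf leaves at least $m - t \geq 1$ variables unqueried, and flipping any unqueried variable flips Parity but not the leaf's output, so the leaf is correct on exactly half of the inputs reaching it. Combining the $\delta/2$ restriction failure probability with this bound shows that any circuit in $\AC^0(s_1,d_1)$ agrees with Parity on at most a $\tfrac{1}{2}+\delta$ fraction of inputs. The main technical obstacle is the delicate bookkeeping in the multi-switching lemma that achieves a constant-factor loss per depth reduction rather than a polynomial loss; this is precisely the advance of~\cite{ImpagliazzoMP2012,Hastad2014} over the original H\r{a}stad switching lemma, and is what drives the exponent $d_1 - 1$.
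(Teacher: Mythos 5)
This theorem is imported by the paper as a black-box result of Impagliazzo--Matthews--Paturi and H\r{a}stad; the paper contains no proof of it, so there is no internal argument to compare yours against. As an outline of the cited result, your sketch follows the standard route and is essentially sound: $d_1-1$ rounds of restriction with $*$-probability $1/O(\log s_1)$, collapse to a shallow decision tree, and the observation that a depth-$t$ tree on $m>t$ surviving variables agrees with Parity on exactly half of all inputs (your $m\geq 2t$ is more than enough). The one step that deserves emphasis is the sentence ``total failure probability at most $\delta/2$ \ldots\ decision tree of depth $t=O(\log(1/\delta))$'': this is precisely where all the difficulty lives. A gate-by-gate application of the ordinary switching lemma forces a union bound over the $s_1$ gates, hence per-gate failure probability $\delta/(2s_1)$ and tree depth $t=\Omega(\log(s_1/\delta))$; the requirement $m>t$ then demands $r\gtrsim(\log s_1)^{d_1-1}\log(s_1/\delta)$, which degenerates to exponent $d_1$ for constant $\delta$. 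The multi-switching lemma circumvents this by switching all depth-two subcircuits at a level simultaneously into a common partial decision tree with small local trees at its leaves, and in H\r{a}stad's treatment the final correlation bound $2^{-\Omega(r/(\log s_1)^{d_1-1})}$ emerges from an induction that interleaves the failure probabilities with the balanced-leaf argument rather than from fixing one globally good restriction up front. You correctly identify and defer this to the cited works, so your proposal is acceptable as a proof sketch of the quoted theorem, but the deferred step is the theorem's entire content rather than routine bookkeeping.
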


Thus, if we want to apply \cref{thm:NW} alongside the lower bound given by \cref{thm:IMP} to construct PRGs that $\varepsilon$-fool $\AC^0(s,d)$, then we need 
\begin{equation}
\label{eq:ineq-r}
r\geq A(\log s + \ell)^{d}\cdot \log (s/\varepsilon)\geq A(\log s + \ell)^{d}\cdot \log (1/\varepsilon)
\end{equation}
for some constant $A > 0$ depending on $d$.

Further, to construct an $(m,r,\ell,s)$-design, we claim that we further need
\begin{equation}
\label{eq:ineq-m}
m\geq \min\{r^2/2\ell, s\}.
\end{equation}

We justify \eqref{eq:ineq-m} below, but first we use it to prove that the Nisan-Wigderson paradigm cannot be used to obtain seedlength optimal in terms of $\varepsilon$ for a large range of $\varepsilon$. 

We assume that $\varepsilon \geq \exp\left(-s^{1/4}\right)$ (the same proof works as long as $\varepsilon \geq \exp\left(-s^{\frac{1}{2}-\Omega(1)}\right)$). In this setting, we show that $m \geq B(\log s)^{2d-1}\cdot (\log (1/\varepsilon))^2$ for some constant $B$ depending on $d$.  

To see this, note that if $m\geq s$, then trivially we have $(\log s)^{2d-1}\cdot (\log 1/\varepsilon)^2 \leq s^{\frac{1}{2}+o(1)} < s\leq m$. So we assume that $m < s$.

In this case, \eqref{eq:ineq-m} tells us that $m\geq r^2/2\ell$, which yields

\begin{align*}
m &\geq \frac{r^2}{2\ell} \geq \frac{A^2 (\log s+\ell)^{2d}\cdot (\log 1/\varepsilon)^2}{2\ell}\\
&\geq \frac{A^2 (\log s)^{2d-1}\ell (\log 1/\varepsilon)^2}{2\ell} = \Omega(A^2 (\log s)^{2d-1}\cdot (\log(1/\varepsilon))^2)
\end{align*}

as required. 

The inequality \eqref{eq:ineq-m} is a standard combinatorial fact and can be found in many standard textbooks. For completeness, here is a simple proof using inclusion-exclusion. 

Note that if $s\leq r$, then we immediately have $m\geq r \geq s$ and \eqref{eq:ineq-m} is proved. So assume that $s > r$ and in particular given any $(m,r,\ell,s)$-design $\mc{F}$, we can choose $t = r/\ell$ sets $T_1,\ldots, T_t$ from $\mc{F}$. By inclusion-exclusion, we have
\begin{align*}
m &\geq |\bigcup_{i\in [t]}T_i| \geq \sum_i |T_i| - \sum_{i < j}|T_i\cap T_j|\\
 &\geq rt - \frac{t^2}{2}\cdot \ell \geq \frac{r^2}{2\ell}
\end{align*}
which concludes the proof of \eqref{eq:ineq-m}.

\end{document}